\documentclass[11pt]{article}
\usepackage{amsmath}
\usepackage{amsfonts}
\usepackage{amssymb,bbm}
\usepackage{latexsym}
\usepackage{graphicx}
\usepackage{color}
\usepackage{url}
\usepackage{xspace} 
\usepackage[algoruled,vlined]{algorithm2e}
\usepackage{algpseudocode}
\usepackage{adjustbox}
\usepackage{floatrow}

\usepackage[sc]{mathpazo}
\usepackage{multirow}

\usepackage{booktabs}
\usepackage{caption}
\usepackage{subcaption}

\usepackage{soul}

\usepackage{wrapfig}
\usepackage{lipsum}

\newcommand{\ya}{\texttt{Y-ALG}\xspace}

\newcommand{\BW}{\textsf{BWC}\xspace}
\newcommand{\CMP}{\textsf{CMP}\xspace}
\newcommand{\ABW}{\textsf{A-BWC}\xspace}
\newcommand{\paran}[1]{\left(#1\right)}
\newcommand*{\qed}{\hfill\ensuremath{\blacksquare}}%

 \newcommand*{\algonamebase}{HEDGE}

\newcommand*{\algoname}{\texttt{\algonamebase}\xspace}

\newcommand{\babis}[1]{\textcolor{blue}{ Babis-2-Ahmad: ``#1''}}

\newtheorem{definition}{Definition}

\newcommand{\E}[1]{\mathbb{E}\left(#1\right)}

\def\he{hyper-edge}
\def\He{Hyper-edge} 
\def\opt{\texttt{OPT}}
\DeclareMathOperator*{\argmax}{arg\,max}
\newcommand{\pr}[1]{\text{Pr}\left(#1\right)}

\def\A{\mathcal{A}}
\def\H{\mathcal{H}}

\newcommand{\field}[1]{\mathbb{#1}} 
\newtheorem{theorem}{Theorem}

\newtheorem{lemma}{Lemma}

\newcommand{\spara}[1]{\smallskip\noindent{\bf #1}\ }

\newcommand{\hide}[1]{}
 
\newcommand{\squishlist}{
 \begin{list}{$\bullet$}
  {  \setlength{\itemsep}{0pt}
     \setlength{\parsep}{3pt}
     \setlength{\topsep}{3pt}
     \setlength{\partopsep}{0pt}
     \setlength{\leftmargin}{2em}
     \setlength{\labelwidth}{1.5em}
     \setlength{\labelsep}{0.5em}
} }
\newcommand{\squishlisttight}{
 \begin{list}{$\bullet$}
  { \setlength{\itemsep}{0pt}
    \setlength{\parsep}{0pt}
    \setlength{\topsep}{0pt}
    \setlength{\partopsep}{0pt}
    \setlength{\leftmargin}{2em}
    \setlength{\labelwidth}{1.5em}
    \setlength{\labelsep}{0.5em}
} }

\newcommand{\squishdesc}{
 \begin{list}{}
  {  \setlength{\itemsep}{0pt}
     \setlength{\parsep}{3pt}
     \setlength{\topsep}{3pt}
     \setlength{\partopsep}{0pt}
     \setlength{\leftmargin}{1em}
     \setlength{\labelwidth}{1.5em}
     \setlength{\labelsep}{0.5em}
} }

\newcommand{\squishend}{
  \end{list}
}

\newcommand{\squishlistt}{
 \begin{list}{---}
  {  \setlength{\itemsep}{0pt}
     \setlength{\parsep}{3pt}
     \setlength{\topsep}{3pt}
     \setlength{\partopsep}{0pt}
     \setlength{\leftmargin}{2em}
     \setlength{\labelwidth}{1.5em}
     \setlength{\labelsep}{0.5em}
} }

\newtheorem{remark}{Remark}
\newcommand{\set}[1]{\left\{#1\right\}}

\def\he{hyper-edge}
\def\He{Hyper-edge}

\newcommand*{\exnamebase}{EXHAUST}

\newcommand*{\exname}{\texttt{\exnamebase}\xspace}

\usepackage{hyperref}

\newtheorem{proof}{Proof}

\let\olddefinition\proof
\renewcommand{\proof}{\olddefinition\normalfont \unskip}

\addtolength{\oddsidemargin}{-.875in}
	\addtolength{\evensidemargin}{-.875in}
	\addtolength{\textwidth}{1.75in}

	\addtolength{\topmargin}{-.875in}
	\addtolength{\textheight}{1.75in}

\begin{document}

\title{Scalable Betweenness Centrality Maximization via Sampling}

\author{Ahmad Mahmoody \thanks{Brown University, \href{mailto:ahmad@brown.edu}{ahmad@brown.edu} }
\and
Charalampos E. Tsourakakis\thanks{Harvard University, \href{mailto:babis@seas.harvard.edu}{babis@seas.harvard.edu} }
\and 
Eli Upfal \thanks{Brown University, \href{mailto:eli@cs.brown.edu}{eli@cs.brown.edu} }
}

\maketitle
\begin{abstract}
\emph{Betweenness centrality} (\BW) is a fundamental centrality measure in social network analysis. Given a large-scale network, how can we find   the most central nodes? This question is of key importance to numerous important applications that rely on \BW, including community detection and understanding graph vulnerability.   Despite the large amount of work on   scalable approximation algorithm design for \BW, estimating \BW on large-scale networks remains a computational challenge.

In this paper, we study the \emph{Centrality Maximization} problem (\CMP): given a graph $G=(V,E)$ and a positive integer $k$, find a set $S^* \subseteq V$  that maximizes \BW subject to the cardinality constraint  $|S^*| \leq k$. We present an efficient randomized algorithm that provides a $(1-1/e-\epsilon)$-approximation with high probability, where $\epsilon>0$.  Our results improve the current state-of-the-art result~\cite{yoshida2014almost}. Furthermore, we provide  
the first theoretical evidence for the validity of a  \emph{crucial} assumption in betweenness centrality estimation, namely that in real-world networks   $O(|V|^2)$ shortest paths 
pass through the top-$k$ central nodes, where $k$ is a constant.  This also explains why our  algorithm runs in near linear time on real-world networks. We also show that our algorithm and analysis can be applied to a wider range of centrality measures, by providing a general analytical framework.

On the experimental side, we perform an extensive experimental analysis of our method on real-world networks, demonstrate its accuracy and scalability, and study different properties of central nodes. Then, we compare the sampling method used by the state-of-the-art algorithm with our method.
Furthermore, we perform a study of \BW in time evolving networks, and see how the centrality of the central nodes in the graphs changes over time. Finally, we compare the performance of the stochastic Kronecker model \cite{leskovec2005realistic}  to real data, and  observe that it generates a similar growth pattern.

\end{abstract}

\section{Introduction} 
\label{sec:intro}

Betweenness centrality (\BW) is a fundamental measure in network analysis, measuring the effectiveness of a vertex in connecting pairs of vertices via shortest paths~\cite{freeman}. Numerous graph mining applications rely on betweenness centrality, such as detecting   communities in social and biological networks \cite{girvan2002community} and  understanding the capabilities of an adversary with respect to attacking a network's connectivity \cite{iyer2013attack}. 
The betweenness centrality of a node $u$ is defined as 
$$B(u)=\sum_{s,t}\frac{\sigma_{s,t}(u)}{\sigma_{s,t}},$$
 where $\sigma_{s,t}$ is the number of $s$-$t$ shortest paths, and $\sigma_{s,t}(u)$ is the number of $s$-$t$ shortest paths that have $u$ as their internal node. However, in many  applications, e.g. \cite{girvan2002community,iyer2013attack}, we are interested in centrality of sets of nodes. For this reason, the notion of \BW has been  extended  to sets of nodes~\cite{ishakian2012framework,yoshida2014almost}. For a set of nodes $S \subseteq V$, we define the betweenness centrality of $S$ as 
 
 $$   B(S) = \sum_{s,t\in V}\frac{\sigma_{s,t}(S)}{\sigma_{s,t}},$$

\noindent where $\sigma_{s,t}(S)$ is the number of $s$-$t$ shortest paths that have an internal node in $S$. Note that we cannot obtain 
$B(S)$  from the values $\{B(v), v \in S\}$. In this work, we study the \emph{Centrality Maximization problem} (\CMP) defined formally as follows:
 \begin{definition}[\CMP]  
	~~Given a network $G=(V,E)$ and a positive integer $k$, find a subset $S^*\subseteq V$ such that 
	$$S^*\in \argmax_{S\subseteq V: |S|\leq k} B(S).$$
	We also denote the maximum centrality of a set of $k$ nodes by $\opt_k$, i.e.,  $\opt_k = \max\limits_{S\subseteq V: |S|\leq k} B(S)$.
 \end{definition}

It is known that \CMP is APX-complete~\cite{fink2011maximum}.
The best deterministic algorithms for \CMP rely on the fact that \BW is monotone-submodular and provide a $(1-1/e)$-approximation \cite{fink2011maximum,dolev2009incremental}. However, the running time of these algorithms is at least quadratic in the input size, and do not scale well to large-scale networks. 

Finding the most central nodes in a network is a computationally challenging problem that we are able to handle accurately and efficiently. In this paper we focus on scalability of \CMP, and
graph mining applications. Our main contributions are summarized as follows.

\spara{Efficient algorithm.} We provide a randomized approximation algorithm, \algoname, based on sampling shortest paths, for accurately estimating the \BW and solving \CMP. Our algorithm is simple, scales gracefully as the size of the graph grows, and improves the previous result~\cite{yoshida2014almost}, by (i) providing a $(1-1/e-\epsilon)$-approximation, and (ii) smaller sized samples. Specifically, in  Yoshida's algorithm   \cite{yoshida2014almost}, a sample contains all the nodes on ``any'' shortest path between a pair, whereas in our algorithm, each sample is just a set of nodes from a single shortest path between the pair. 

\hide{ We also show that the number of samples in~\cite{yoshida2014almost}, even with the $\opt_k = \Theta(n^2)$ assumption, is not sufficient.}

\spara{The $\opt_k=\Theta(n^2)$ assumption.} 
Prior work on \BW estimation strongly relies on the assumption that $\opt_k = \Theta(n^2)$ for a constant integer $k$~\cite{yoshida2014almost}. 
As we show, this assumption is not true in general. So far,
only  empirical evidence supports this strong assumption.

We show that two broad families of networks satisfy this assumption:  bounded treewidth networks and a popular family of stochastic networks that provably generate scale-free, small-world graphs with high probability. Note that the classical Barab{\'a}si-Albert scale-free random tree model \cite{barabasi1999emergence,mori2005maximum} belongs to the former category.  Our results imply that the $\opt_k = \Theta(n^2)$ assumption holds even for $k=1$, for these families of networks. To our knowledge, this is the first theoretical evidence for the validity of this {\em crucial} assumption on real-world networks.

\spara{General analytical framework.} To analyze our algorithm, \algoname, we provide a general analytical framework based on Chernoff bound and submodular optimization, and show that it can be applied to any other centrality measure if it (i)  is monotone-submodular, and (ii)   admits a {\he} sampler (defined in Sect.~\ref{sec:proposed}). Two examples of such centralities are the \emph{coverage}~\cite{yoshida2014almost} and the \emph{$\kappa$-path} centralities~\cite{alahakoon2011k}. 

\hide{(see Sect.~\ref{sec:beyond}). \babis{please add references here for these two centrality measures}}

\spara{Experimental evaluation.} We provide an experimental evaluation of our algorithm that shows that 
it scales gracefully as the graph size increases and that it 
provides accurate estimates. We also provide a comparison between the method by Yoshida \cite{yoshida2014almost}, and our sampling method.

\spara{Applications.}   Our scalable algorithm enables us to study some interesting characteristics of the central nodes. In particular, if $S$ is a set of nodes with high \BW, we focus on following questions.
\begin{itemize}

	\item[(1)] 
	How does the centrality of the most central set of nodes change in time-evolving networks? We study the {\sc DBLP} and the {\sc Autonomous} systems graphs. We mine interesting growth patterns, and 
	we compare our results to stochastic Kronecker graphs \cite{leskovec2005realistic}, a popular random graph model that mimics certain aspects of real-world networks. We observe that the Kronecker graphs behave similarly to real-world networks. 
	
	\smallskip
 
	\item[(2)]  Influence maximization has received a lot of attention since the seminal work of Kempe \emph{et al.} \cite{kempe2003maximizing}. 
Using our scalable algorithm we can compute a set of central nodes
that can be used as seeds for influence maximization. We find that
betweenness centrality is performing  relatively well compared to 
a state-of-the-art influence maximization algorithm.
	
	\smallskip
	 
	\item[(3)]  We study four  strategies for attacking a network 
	using  four centrality measures: betweenness, coverage, $\kappa$-path, and triangle  centrality. 
	Interestingly, we find that the  $\kappa$-path and triangle centralities
	can be more effective at destroying the connectivity of a graph.

\end{itemize}

\section{Related Work} 
\label{sec:related}
 
\spara{Centrality measures.}  There exists a wide variety of centrality measures: degree centrality,  Pagerank \cite{page1999pagerank}, HITS \cite{kleinberg1999authoritative}, Salsa \cite{lempel2001salsa}, closeness centrality \cite{bavelas},  harmonic centrality \cite{boldi2014axioms}, betweenness centrality \cite{freeman}, random walk betweenness centrality \cite{newman2005measure}, coverage centrality \cite{yoshida2014almost}, $\kappa$-path centrality  \cite{alahakoon2011k}, Katz centrality \cite{katz}, rumor centrality \cite{shah2012rumor} are some of the important centrality measures. Boldi and Vigna proposed an axiomatic study of centrality measures \cite{boldi2014axioms}. In general, choosing a good centrality measure is application dependent~\cite{teng}. In the following we discuss in further detail  the centrality measure of our focus, the betweenness centrality.

\spara{Betweenness centrality (\BW)} is  a fundamental measure in  network analysis. The betweenness centrality index is attributed to Freeman \cite{freeman}.  
\BW has been used in a wide variety of graph mining applications. For instance, Girvan and Newman use \BW to find communities in social and biological networks \cite{girvan2002community}. In a similar spirit, Iyer \emph{et al}. use \BW to attack the connectivity of networks by iteratively removing the most central vertices \cite{iyer2013attack}.   

The fastest known exact algorithm for computing \BW exactly requires 
$O(mn)$ time in unweighted, and $O(nm+n^2 \log{m} )$ for weighted graphs~\cite{brandes2001faster,erdHos2015divide, sariyuce2012shattering}. There exist randomized algorithms \cite{bader2007approximating,pich,riondato2014fast} 
which provide either additive error or multiplicative error guarantees 
with high probability.

For \CMP, the state-of-the-art algorithm~\cite{yoshida2014almost} (and the only scalable proposed algorithm based on sampling) provides a mixed error guarantee, combining additive and multiplicative error. Specifically, this algorithm provides a solution whose centrality is at least $(1-\frac{1}{e}) \opt_k - \epsilon n^2$, by sampling $O({\log{n}}/{\epsilon^2})$ \emph{{\he}s}, where each {\he} is a set of all nodes on any shortest path between two random nodes with some assigned weights. \hide{ We show that even if $\opt_k = \Theta(n^2)$, for any constant $c>0$, sampling $o(n/\epsilon^2)$ number of {\he}s can result in an output  $S$ such that $B(S) < c \cdotp \opt_k$. We also improve this approximation guarantee to $(1-1/e-\epsilon)\opt_k$ by sampling random shortest paths.}

As we mentioned before, \CMP is APX-complete, and the best algorithm (i.e. classic greedy algorithm for maximizing the monotone-submodular functions)  using \emph{exact} computations of \BW provides $(1-1/e)$-approximation~\cite{fink2011maximum}. We call this greedy algorithm by  \exname algorithm, and works as follows: It starts with an empty set $S$. Then, at  any round, it selects a node $u$ that maximizes the \emph{adaptive betweenness centrality} (\ABW) of $u$ according to $S$ defined as
$$B(u|S) = \sum_{(s,t), s,t\neq u}\frac{\sigma_{s,t}(u|S)}{\sigma_{s,t}},$$
 where $\sigma_{st}(u|S)$ is the number of $s$-$t$ shortest paths that 
 do not pass through any node in $S$ and have $u$ as an internal node. The algorithm adds $u$ to $S$ and stops when $|S| = k$. Note that the \ABW is intimately connected to the \BW through the following well-known formula  \cite{yoshida2014almost}:
$$B(S \cup \{u\})  = B(S) + B(u|S).$$

\section{Proposed Method}\label{sec:proposed}

In this section we provide our algorithm, \algoname (Hyper-EDge GrEedy), and a general framework for its analysis. We start by defining a \emph{{\he} sampler}, that will be used in \algoname.  

\begin{definition}[{\He} sampler]
We say that an algorithm $\A$ is a \textbf{{\he} sampler} for a function $C:2^{V} \rightarrow \mathbb{R}$ if it outputs a randomly generated subset of nodes $h\subseteq V$ such that
$$\forall S \subseteq V: \quad \text{Pr}_{h \sim\A}(h\cap S \neq \emptyset) = \frac{1}{\alpha}C(S),$$
where $\alpha$ is a normalizing factor, and independent of the set $S$. We call each $h$ (sampled by $\A$) a random {\he}, or in short, a \textbf{\he}. In this case, we say $C$ \textbf{admits} a {\he} sampler.  
\end{definition}

Our proposed algorithm \algoname  assumes the existence of a 
{\he} sampler and uses it in a black-box manner. Namely, \algoname  is oblivious to the specific mechanics of the {\he} sampler. 
The following lemma provides a simple {\he} sampler for \BW. 

\begin{lemma}\label{lem:BWsampler}
	The \BW admits a {\he} sampler.	
\end{lemma}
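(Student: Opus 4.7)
The plan is to exhibit a concrete sampler and verify the defining identity directly. Specifically, I would let $\A$ be the following two-step procedure on $G=(V,E)$ with $n=|V|$: first, sample an ordered pair $(s,t)\in V\times V$ uniformly at random; second, if $\sigma_{s,t}>0$, sample one $s$-$t$ shortest path $P$ uniformly at random from the $\sigma_{s,t}$ many shortest $s$-$t$ paths, and output the set $h$ of internal vertices of $P$ (output $h=\emptyset$ if $\sigma_{s,t}=0$ or if the path has no internal vertex).

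Next I would verify the required probability identity. Fix an arbitrary $S\subseteq V$. Condition on the pair $(s,t)$. Among the $\sigma_{s,t}$ shortest paths between $s$ and $t$, exactly $\sigma_{s,t}(S)$ of them have an internal vertex lying in $S$, so the conditional probability that the sampled path's internal-vertex set meets $S$ is $\sigma_{s,t}(S)/\sigma_{s,t}$ (with the convention $0/0=0$ when $\sigma_{s,t}=0$). Averaging over the uniform choice of $(s,t)$ then gives
\[
\Pr_{h\sim\A}\!\left(h\cap S\neq\emptyset\right)\;=\;\frac{1}{n^{2}}\sum_{s,t\in V}\frac{\sigma_{s,t}(S)}{\sigma_{s,t}}\;=\;\frac{B(S)}{n^{2}}.
\]
Setting $\alpha=n^{2}$, which is a normalizing constant that depends only on the graph and not on $S$, this matches the definition of a hyper-edge sampler exactly, proving the lemma.

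The argument has no real obstacle; the only subtlety is bookkeeping. I would make sure to (i) handle disconnected pairs cleanly via the $0/0=0$ convention so that the same identity holds vacuously when $\sigma_{s,t}=0$, and (ii) align the notion of ``internal node'' with the definition of $\sigma_{s,t}(S)$ used in the paper, so that endpoints $s,t$ are not counted when checking $h\cap S\neq\emptyset$. An implementation comment worth including is that such a uniformly random shortest path can be sampled efficiently by running a BFS (or Dijkstra, in the weighted case) from $s$ to obtain the shortest-path DAG and counts $\sigma_{s,v}$ for all $v$, then tracing backward from $t$ where at each step a predecessor $v$ of the current vertex $u$ on the DAG is chosen with probability $\sigma_{s,v}/\sigma_{s,u}$; this confirms that the abstract sampler $\A$ is realizable in polynomial time, which will matter when \algoname is analyzed later.
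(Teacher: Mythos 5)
Your proposal is correct and follows essentially the same route as the paper: sample a uniformly random pair $(s,t)$, then a uniformly random $s$-$t$ shortest path, and output its internal nodes, so that the conditional intersection probability is exactly $\sigma_{s,t}(S)/\sigma_{s,t}$ and averaging yields $B(S)/\alpha$. The only (immaterial) difference is your choice of $\alpha=n^2$ from sampling ordered pairs including $s=t$, versus the paper's $\alpha=n(n-1)$; both are valid normalizations independent of $S$, and your extra care with the $0/0$ convention and the BFS-based implementation matches what the paper sketches.
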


\begin{proof}
	Let $\A$ be an algorithm that selects two nodes $s,t \in V$ uniformly at random, selects a $s$-$t$ shortest path $P$ uniformly at random (this can be done in linear time $O(m+n)$ using bread-first-search from $s$, counting the number of shortest paths from $s$ and backward pointers; e.g. see \cite{riondato2014fast}), and finally outputs the internal nodes of $P$ (i.e., the nodes of $P$ except $s$ and $t$). 
	
	Now, suppose $h$ is an output of $\A$. Since the probability of choosing each pair is $\frac{1}{n(n-1)}$, and for a given pair $s,t$ the probability of $S\cap h \neq \emptyset$ is $\frac{\sigma_{s,t}(S)}{\sigma_{s,t}}$, for every $S\subseteq V$ we have
	\small
	$$\text{Pr}_{h \sim\A}(h\cap S \neq \emptyset) = \sum_{s,t\in V}\frac{1}{n(n-1)}\frac{\sigma_{s,t}(S)}{\sigma_{s,t}} = \frac{1}{n(n-1)}B(S).$$
	\normalsize
Also note that in this case, the normalizing factor is $\alpha = n(n-1) = \Theta(n^2)$. \qed
\end{proof}

For a subset of nodes $S\subseteq V$, and a set $\H$ of independently generated hyper-edges, define
 
$$\deg_{\H}(S) = |\set{h\in \H \mid h\cap S \neq \emptyset}|.$$ 

\noindent The pseudocode of our proposed algorithm \algoname is given in Algorithm~\ref{alg:hedge}.  First, it samples $q$ hyper-edges using the {\he} sampler $\A$  and then it runs a natural greedy procedure on $\H$.

\begin{algorithm}[!htp] 
\small
\BlankLine
{\bf Input:} A {\he} sampler $\A$ for \BW, number of hyper-edges $q$, and the size of the output set $k$.\\ 
{\bf Output:} A subset of nodes, $S$ of size $k$.
\BlankLine
\Begin{
	$\H \leftarrow \emptyset$\;
	\For{$i\in [q]$}{
		$h \sim \A$ (sample a random hyper-edge)\;
		$\H \leftarrow \H \cup \set{h}$\;
	}
	$S \leftarrow \emptyset$ \; 
	\While{$|S| <  k$}{
	    $u \leftarrow \argmax_{v\in V} \deg_{\H}(\set{v})$\;
	    $S \leftarrow S \cup \set{u}$\;
	    \For{$h\in \H$ such that $u \in h$}{
		$\H \leftarrow \H\setminus\set{h}$\;
	    }
	}
	\Return $S$\;
}
\caption{\label{alg:hedge}\algoname}
\end{algorithm}

\subsection{Analysis}\label{sec:analysis}
In this section we provide our general analytical framework for \algoname, which works with any {\he} sampler.  To start, define $B_{\H}(S) = \frac{\alpha}{|\H|} \deg_{\H}(S)$ as the  centrality (\BW) of a set $S$ according to the sample $\H$ of {\he}s, and for a graph $G$ let 
$$q(G,\epsilon) = \frac{3\alpha(\ell+k)\log(n)}{\epsilon^2 \opt_k},$$ 
where $n$ is the number of nodes in  $G$, and $\ell$ is a positive integer.   
\begin{lemma}\label{lem:num-samples}
Let $\H$ be a sample of independent {\he}s such that $|\H| \geq q(G, \epsilon)$. Then, for all $S\subseteq V$ where $|S| \leq k$ we have
 $\pr{|B_{\H}(S) - B(S)| \geq \epsilon\cdot\opt_k} < n^{-\ell}$.
\end{lemma}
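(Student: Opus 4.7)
The plan is to reduce the statement, for each fixed $S$, to a concentration bound on a sum of i.i.d.\ Bernoullis and then apply a Chernoff tail bound; a union bound over the at most $n^k$ subsets of size $\leq k$ is what the ``$+k$'' in the factor $\ell+k$ of $q(G,\epsilon)$ is designed to absorb in downstream uses.

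First, I would unpack the hyper-edge sampler definition: since $\H$ consists of $|\H|$ independent samples $h_i \sim \A$ and by definition $\pr{h_i \cap S \neq \emptyset} = B(S)/\alpha$, the quantity $\deg_{\H}(S) = \sum_{i=1}^{|\H|} \mathbf{1}[h_i \cap S \neq \emptyset]$ is a sum of $|\H|$ i.i.d.\ Bernoulli$(B(S)/\alpha)$ indicators. In particular $\E{B_{\H}(S)} = B(S)$, so the event $\{|B_\H(S)-B(S)|\geq \epsilon\,\opt_k\}$ amounts to a deviation of $\deg_\H(S)$ from its mean $\mu_S := |\H|B(S)/\alpha$ by at least $\epsilon |\H| \opt_k/\alpha$.

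Second, I would apply the multiplicative Chernoff bound with parameter $\delta_S := \epsilon\,\opt_k / B(S)$, chosen precisely so that $\delta_S \mu_S = \epsilon|\H|\opt_k/\alpha$ matches the required deviation. The analysis then splits into two regimes according to the size of $B(S)$ relative to $\epsilon\,\opt_k$. If $B(S) \geq \epsilon\,\opt_k$, then $\delta_S \leq 1$ and both tails are bounded by $2\exp(-\delta_S^2 \mu_S/3)$; plugging in $|\H| \geq q(G,\epsilon)$ and using $B(S) \leq \opt_k$, the exponent simplifies to at least $(\ell+k)\log n$. If $B(S) < \epsilon\,\opt_k$, then $\delta_S > 1$ and the lower-tail event $B_\H(S) \leq B(S) - \epsilon\,\opt_k$ is vacuous (the right-hand side is negative while $B_\H(S) \geq 0$); only the upper tail contributes $\exp(-\delta_S \mu_S / 3)$, whose exponent evaluates to $(\ell+k) \log n / \epsilon \geq (\ell+k)\log n$ for $\epsilon \leq 1$. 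In either regime the pointwise failure probability is at most $2 n^{-(\ell+k)}$, which is already $< n^{-\ell}$ for $k \geq 1$.

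I expect the main obstacle to be exactly this case split between $\delta_S \leq 1$ and $\delta_S > 1$: a naive additive Hoeffding bound would only yield $\exp(-\Theta(|\H|\,\opt_k^2/\alpha^2))$, requiring $|\H|$ to scale as $(\alpha/\opt_k)^2$ and thus becoming vacuous once $\opt_k \ll \alpha$, whereas the two-regime multiplicative analysis above gives $n^{-(\ell+k)}$ with $|\H|$ scaling only as $\alpha/\opt_k$. This is precisely what makes the sample size in $q(G,\epsilon)$ linear, rather than quadratic, in $\alpha/\opt_k$; the remaining ``$+k$'' slack in the exponent provides headroom for the natural union bound over the $\leq \sum_{j\leq k}\binom{n}{j} \leq n^k$ candidate subsets whenever a uniform-in-$S$ version of the statement is needed by the greedy analysis.
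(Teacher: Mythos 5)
Your proof is correct and follows essentially the same route as the paper's: i.i.d.\ Bernoulli indicators for hyper-edge membership of $S$, the multiplicative Chernoff bound with $\delta_S = \epsilon\,\opt_k/B(S)$, and the same exponent calculation using $B(S)\leq \opt_k$ and the definition of $q(G,\epsilon)$. Your explicit split between the regimes $\delta_S\leq 1$ and $\delta_S>1$ is in fact slightly more careful than the paper, which applies the two-sided form $2\exp(-\delta^2\mu/3)$ without restricting $\delta$ (a form that is not valid as stated when $\delta>1$); your observation that the lower tail is vacuous there and the upper tail still yields exponent $(\ell+k)\log(n)/\epsilon \geq (\ell+k)\log(n)$ patches this minor imprecision while reaching the same conclusion.
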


 \begin{proof}
Suppose $S\subseteq V$ and $|S|\leq k$, and let $X_i$ be a binary random variable that indicates whether the $i$-th {\he} in $\H$ intersects with $S$. Notice that $deg_{\H}(S) = \sum_{i=1}^{|\H|} X_i$ and by the linearity of expectation $\E{deg_{\H}(S)} = |\H| \cdot \E{X_1}  = \frac{q}{\alpha}B(S)$. Using the independence assumption and the Chernoff bound, we obtain:
\begin{align*}
&\pr{|B_{\H}(S) - B(S)| \geq \delta\cdot B(S)} &=  \\
&\pr{\left| \frac{q}{\alpha}B_\H(S) - \frac{q}{\alpha}B(S) \right| \geq \frac{\delta q}{\alpha} \cdot B(S)} &= \\
 &\pr{|deg_{\H}(S) - \E{deg_{\mathcal{H}}(S)}| \geq \delta\cdot \E{deg_{\mathcal{H}'}(S)}} &\leq   \\
&2\exp\paran{-\frac{\delta^2}{3} \frac{q}{\alpha} B(S)}. 
\end{align*}
Now, by letting $\delta = \frac{\epsilon\opt_k}{B(S)}$ and substituting  the lower bound for $q(G, \epsilon)$  we obtain 
\begin{align*}
 \pr{|B_{\mathcal{H}} (S) - B(S)| \geq \epsilon \opt_k} 
&\leq n^{-(\ell+k)},
\end{align*}
and by taking a union bound over all possible subsets  $S\subseteq V$ of size  $k$ we obtain $|B_{\H}(S) - B(S)| < \epsilon\cdot\opt_k$ with probability at least $1-1/n^\ell$, for all such  subsets $S$. \qed
\end{proof}

Now, the following theorem shows that if the number of samples, i.e. $|\H|$, is at least $q(G,\epsilon/2)$, then \algoname provides a $(1-1/e-\epsilon)$-approximate solution.

\begin{theorem}\label{thm:approximation-rate}
If $\H$ is a sample of at least $q(G, \epsilon/2)$ {\he}s for some $\epsilon > 0$, and $S$ is the output of \algoname, we have $B(S)\geq (1-1/e-\epsilon)\opt_k$, with high probability.
\end{theorem}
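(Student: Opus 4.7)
The plan is to bridge the gap between the sampled objective $B_{\mathcal{H}}$ that \algoname actually optimizes and the true objective $B$, using Lemma~\ref{lem:num-samples} to transfer approximation guarantees. First I would observe that, by construction, $B_{\mathcal{H}}(S) = (\alpha/|\mathcal{H}|)\deg_{\mathcal{H}}(S)$ is (a scaling of) a coverage function on the hyper-edges in $\mathcal{H}$, so it is monotone and submodular. Moreover, the while-loop in \algoname{} is exactly the standard greedy procedure for maximizing $\deg_{\mathcal{H}}(\cdot)$ subject to $|S|\le k$: at each step it adds the node covering the most uncovered hyper-edges. By Nemhauser--Wolsey--Fisher, the output $S$ therefore satisfies
\[
B_{\mathcal{H}}(S)\;\ge\;\Bigl(1-\tfrac{1}{e}\Bigr)\,\max_{T\subseteq V,\,|T|\le k} B_{\mathcal{H}}(T).
\]

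Next I would invoke Lemma~\ref{lem:num-samples} with parameter $\epsilon/2$ in place of $\epsilon$. Since $|\mathcal{H}|\ge q(G,\epsilon/2)$, with probability at least $1-n^{-\ell}$ the uniform deviation bound
\[
|B_{\mathcal{H}}(T)-B(T)|\;\le\;\tfrac{\epsilon}{2}\,\opt_k
\]
holds simultaneously for every $T\subseteq V$ with $|T|\le k$. Condition on this event; this is the ``with high probability'' statement in the theorem.

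Let $S^\star$ be an optimal set for $B$, so $B(S^\star)=\opt_k$. Under the good event, $B_{\mathcal{H}}(S^\star)\ge (1-\epsilon/2)\,\opt_k$, and therefore $\max_{|T|\le k} B_{\mathcal{H}}(T)\ge (1-\epsilon/2)\,\opt_k$. Chaining with the greedy guarantee for $S$ and then pulling $B_{\mathcal{H}}(S)$ back to $B(S)$ via the deviation bound applied to $S$ itself (which has size $\le k$) gives
\[
B(S)\;\ge\;B_{\mathcal{H}}(S)-\tfrac{\epsilon}{2}\opt_k\;\ge\;\Bigl(1-\tfrac{1}{e}\Bigr)\Bigl(1-\tfrac{\epsilon}{2}\Bigr)\opt_k-\tfrac{\epsilon}{2}\opt_k,
\]
and expanding shows the right-hand side is at least $(1-1/e-\epsilon)\opt_k$ since $(1-1/e)\epsilon/2+\epsilon/2\le\epsilon$.

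The only delicate point is that the deviation bound must be applied to two sets, $S^\star$ and the random output $S$, and the latter depends on $\mathcal{H}$; this is precisely why Lemma~\ref{lem:num-samples} is stated as a uniform bound over all size-$k$ sets via a union bound (absorbed into the extra $n^k$ factor in $q(G,\epsilon)$). Everything else is routine algebra and the textbook greedy guarantee for monotone submodular maximization; I do not anticipate further obstacles.
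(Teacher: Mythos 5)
Your proposal is correct and follows essentially the same argument as the paper: monotonicity and submodularity of $B_{\H}$ give the $(1-1/e)$ greedy guarantee on the sampled objective, and Lemma~\ref{lem:num-samples} with parameter $\epsilon/2$ transfers this to $B$ via the same chain of inequalities. Your explicit remark about why the deviation bound must hold uniformly (since the output $S$ depends on $\H$) is a point the paper leaves implicit, but the substance is identical.
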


\begin{proof}
Note that $B$ is 
(i) monotone since if $S_1\subseteq S_2$ then $B(S_1) \leq B(S_2)$, and (ii)  submodular since if $S_1 \subseteq S_2$ and $u \in V\setminus S_2$ then 
$B(S_2\cup\set{u}) - B(S_2) \leq B(S_1\cup\set{u}) - B(S_1)$. 

Similarly, $B_{\H}$ is monotone and submodular. Therefore, using the greedy algorithm (second part of \algoname) we have (see~\cite{nemhauser1978best})
$$B_\H(S) \geq  (1-1/e) B_\H(S') \geq (1-1/e)B_\H(S^*),$$
where 
$$S' = \argmax\limits_{T: |T| \leq k} B_\H(T), \quad \text{and}\quad S^* = \argmax_{T: |T| \leq k} B(T).$$
  Notice that $\opt_k = B(S^*)$. Since $|\H| \geq q(G, \epsilon/2)$, by Lemma~\ref{lem:num-samples} with probability $1-\frac{1}{n^\ell}$ we have

\small
\begin{align*}
B(S) &\geq B_\H(S) - \frac{\epsilon}{2}\opt_k \geq \paran{1-\frac{1}{e}}B_\H(S^*) - \frac{\epsilon}{2}\opt_k \\ 
&\geq \paran{1-\frac{1}{e}}\paran{B(S^*) - \frac{\epsilon}{2}\opt_k} - \frac{\epsilon}{2}\opt_k \geq \paran{1-\frac{1}{e}-\epsilon}\opt_k,
\end{align*} \normalsize
 
\noindent where we used the fact $B(S^*) = \opt_k$, and the proof is complete. \qed
\end{proof}

The total running time of \algoname depends on the running time of the {\he}  sampler and the greedy procedure. Specifically,  the total running time is 
$ O(t_{he} \cdot |\mathcal{H}|   +(n\log(n)+|\mathcal{H}|))$, where 
$t_{he}$ is the \emph{expected} required amount  of time for the sampler to output a single \he.   The first term corresponds to the total required time for sampling, and the second term corresponds to an almost linear time implementation of greedy procedure as in \cite{borgs2014maximizing}.

\begin{remark}
Note that if $\opt_k = \Theta(n^2)$, the sample complexity in Theorem~\ref{thm:approximation-rate} becomes $O\paran{\frac{k\log(n)}{\epsilon^2}}$. We provide the first theoretical study on this assumption in Sect.~\ref{sec:optn2}.
\end{remark}

Finally, we provide a lower bound on the sample complexity of \algoname, in order to output a $(1-1/e-\epsilon)$-approximate solution. This lower bound is still valid even if $\opt_k = \Theta(n^2)$.

\begin{theorem}
\label{thm:lowerbound}
	In order to output  a set $S$ of size $k$ such that $B(S) \geq (1-1/e-\epsilon)\opt_k$ w.h.p.,  the sample size in both \algoname and \cite{yoshida2014almost}'s algorithm needs to be at least $\Omega \paran{\frac{n}{\epsilon^2}}$.
\end{theorem}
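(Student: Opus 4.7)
The plan is to construct a specific $n$-vertex graph $G$ with $\opt_k=\Theta(n)$ on which both \algoname and the algorithm of \cite{yoshida2014almost} fail the $(1-1/e-\epsilon)$-approximation with constant probability when $q=o(n/\epsilon^2)$ hyper-edges are sampled. The proof has three parts: building a hard instance, computing the estimator's fluctuation, and coupling that fluctuation to an algorithmic failure.

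For the instance, I would take $G$ to be a simple bottleneck graph: a distinguished vertex $u^{*}$ adjacent to two disjoint groups $A,B$ of $\Theta(\sqrt{n})$ vertices, together with $n-\Theta(\sqrt{n})$ isolated vertices that inflate the sampler's normalizer to $\alpha=n(n-1)=\Theta(n^2)$. Then $B(\{u^{*}\})=|A|\cdot|B|=\Theta(n)$, so $\opt_k=\Theta(n)$ and a random hyper-edge intersects $\{u^{*}\}$ with probability $p^{*}=B(\{u^{*}\})/\alpha=\Theta(1/n)$.

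For the variance step, $\deg_\H(\{u^{*}\})\sim\mathrm{Bin}(q,p^{*})$ gives
\[
\mathrm{Var}\bigl(B_\H(\{u^{*}\})\bigr)=\frac{\alpha^{2}\,p^{*}(1-p^{*})}{q}=\Theta\!\left(\frac{n^{3}}{q}\right).
\]
Requiring the standard deviation of $B_\H$ to be at most $\epsilon\cdot\opt_k=\Theta(\epsilon n)$, which is exactly the slack the analysis of Theorem~\ref{thm:approximation-rate} exploits, forces $q=\Omega(n/\epsilon^{2})$. The same computation applies under Yoshida's hyper-edge sampler, since the hit probability of any fixed $S$ is again $B(S)/\alpha$ with $\alpha=\Theta(n^{2})$.

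To convert this estimator bound into a genuine lower bound on both algorithms, I would attach to $G$ a decoy gadget producing a candidate vertex $v^{*}$ with $B(\{v^{*}\})$ sitting strictly below the approximation threshold $(1-1/e-\epsilon)\opt_k$, and argue via a second-moment / Berry--Esseen style anti-concentration argument that, for $q<n/\epsilon^{2}$, the ordering of $\deg_\H(\{u^{*}\})$ and $\deg_\H(\{v^{*}\})$ flips with probability $\Omega(1)$. Since both algorithms' first greedy pick is dictated by the maximum empirical degree, they would then output $v^{*}$, violating the $(1-1/e-\epsilon)$ bound.

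The main obstacle is this last step. Arranging the decoy so that its true betweenness differs from $B(\{u^{*}\})$ by exactly the $\Theta(\epsilon\opt_k)$ gap needed to saturate the variance bound --- while still having $B(\{v^{*}\})<(1-1/e-\epsilon)\opt_k$ --- is delicate, since a generic single-vertex decoy below the threshold is separated by $\Theta(\opt_k)$ and naively only yields an $\Omega(n)$ bound. The proper construction will have to distribute the decoy's deficit across multiple greedy rounds, for instance as $k$ coupled sub-gadgets each with per-step suboptimality $\Theta(\epsilon\opt_k/k)$, so that $k$ near-ties individually indistinguishable from the optimum at sample size $o(n/\epsilon^2)$ accumulate to a total loss of $\Theta(\epsilon\opt_k)$, exactly the amount needed to miss the threshold; this is where I expect the bulk of the proof to lie.
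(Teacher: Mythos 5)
There is a genuine gap, and you have correctly located it yourself: the step that converts large estimator variance into an actual failure of the algorithm is the entire content of the theorem, and your proposal leaves it as an unfinished construction problem. Showing that the standard deviation of $B_{\H}(\{u^{*}\})$ exceeds $\epsilon\cdot\opt_k$ when $q=o(n/\epsilon^2)$ only shows that the \emph{sufficient} condition exploited in the proof of Theorem~\ref{thm:approximation-rate} fails; it does not show the algorithm errs, since a noisy estimator can still rank a uniquely dominant vertex first with probability $1$. Your own diagnosis is right that a single decoy cannot close this: to violate the guarantee the decoy must satisfy $B(\{v^{*}\})<(1-1/e-\epsilon)\opt_k$, so it is separated from $u^{*}$ by $\Theta(\opt_k)$, not $\Theta(\epsilon\opt_k)$, and anti-concentration at that scale only gives $q=\Omega(n)$. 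The proposed repair --- $k$ coupled sub-gadgets, each with per-round deficit $\Theta(\epsilon\opt_k/k)$, whose near-ties flip independently and accumulate --- is a research program rather than a proof: you would need simultaneous anti-concentration across $k$ rounds of a greedy process whose later comparisons are conditioned on earlier picks and on the removal of hit hyper-edges. A secondary but real problem is the instance itself: with $\opt_k=\Theta(n)$ you cannot establish the claim, made immediately before the theorem and used in the remark after it, that the lower bound holds \emph{even when} $\opt_k=\Theta(n^2)$; on your instance Yoshida's guarantee $(1-1/e)\opt_k-\epsilon n^2$ is vacuously negative, so nothing about his $O(\log n/\epsilon^2)$ sample bound is refuted.

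The paper's proof avoids probability calculations entirely and uses a detection (coupon-collector style) argument instead. It sets $k=\epsilon n$ and lets the optimal set be an entire diameter-$2$ component $A$ on $\epsilon n$ vertices (a rook's-graph construction), so that $\opt_k=\Theta(\epsilon^2 n^2)=\Theta(n^2)$, padded with $(1-\epsilon)n$ isolated vertices. A sampled hyper-edge is non-empty only when both endpoints land in $A$, which happens with probability $\epsilon^2$, and even then it reveals $O(1)$ internal vertices; hence $o(n/\epsilon^2)$ samples touch only $o(n)=o(k)$ vertices of $A$. The algorithm cannot distinguish the remaining $(1-o(1))\cdot k$ untouched vertices of $A$ from the isolated vertices, so it cannot fill its budget with vertices of $A$ and its output has centrality $o(\opt_k)$. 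If you want to salvage your write-up, this information-theoretic mechanism --- too few samples see too few distinct optimal vertices --- is the one to aim for, rather than CLT-scale fluctuations of the degree estimator.
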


\begin{proof}
Define a graph $A=(V_A,E_A)$ where 
$$V_{A} = \set{(i,j)\mid 1\leq i \leq \epsilon \sqrt{n} \text{ and } 1 \leq j \leq \sqrt{n}},$$
 and two nodes $(i,j)$ and $(i',j')$ are connected if $i=i'$ or $j=j'$. Note that the distance between every pair of nodes is at most 2, and there are at most 2 shortest paths between a pair of nodes in $A$. Let $G$ be a graph of size $n$ which has $(1-\epsilon)n$  isolated nodes, and $A$ as its largest connected component.  
 
\underline{Claim:} 	If $k = \epsilon n$, then $\opt_k = \sqrt{n}(\sqrt{n}-1)\cdot \epsilon\sqrt{n}(\epsilon\sqrt{n}-1)= \Theta(\epsilon^2 n^2) = \Theta(n^2)$, since $\epsilon$ is a constant.

\noindent	Obviously, all the isolated nodes in $G$ have zero \BW. So, the optimal set, $S^*$, is $A$ (which is already of size $k=\epsilon n$). Now, if for two nodes $s,t$ in $G$, there is a shortest path with an internal node in $A$, we have $s,t \in V_A$ such that $s=(i,j)$ and $t=(i',j')$ where $i\neq i'$ and $j\neq j'$. In this case, there are exactly two $s$-$t$ shortest paths with exactly one internal node. Finally, the number of such pairs is exactly $\sqrt{n}(\sqrt{n}-1)\cdot \epsilon\sqrt{n}(\epsilon\sqrt{n}-1)$.

Now note that in both \algoname and the algorithm of \cite{yoshida2014almost}, we first choose a pair of nodes in $s, t$ in $G$, and if $s$ and $t$ are not in the same connected component, the returned {\he} is an empty set. Therefore, in order to have a non-empty {\he} both nodes should be chosen from $V_A$, which occurs with probability $\epsilon^2$. Thus, sampling $o(n/\epsilon^2)$ {\he} results in reaching to at most $o(n) = o(|A|) = o(k)$ nodes, and so, the algorithm will not be able to tell the difference between the isolated nodes and many (arbitrarily large) number of nodes in $A$ as they were not detected by any {\he}. This concludes our proof. \qed
\end{proof}

\begin{remark}
	Theorem~\ref{thm:lowerbound} implies that the approximation guarantee of the Algorithm proposed in~\cite{yoshida2014almost} that uses   $M=O(\log(n)/\epsilon^2)$ samples is not correct.
\end{remark}

\subsection{Beyond betweenness centrality}\label{sec:beyond}
Suppose $C:2^{V}\rightarrow \mathbb{R}^{\geq 0}$ is a centrality measure that is also defined on subset of nodes. Clearly, if $C$ is monotone-submodular and admits a {\he} sampler, the algorithm \algoname can be applied to and all the results in this section hold for $C$. Here, we give a couple of examples of such centrality measures, and it is easy to verify their monotonicity and submodularity.

\spara{Coverage centrality.} The coverage centrality~\cite{yoshida2014almost} for a set $S\subseteq V$ is defined as 
$C(S) = \sum_{(s,t)\in V^2} P_{s,t}(S),$
where $P_{s,t}(S)$ is 1 if $S$ has an internal node on any $s$-$t$ shortest path, and 0 otherwise.
The coverage centrality admits a {\he} sampler $\A$ as follows: uniformly at random pick two nodes $s$ and $t$. By running a breadth-first-search from $s$, we output every node that is on at least one shortest path from $s$ to $t$. Note that for every subset of nodes $\text{Pr}_{h \sim\A}(h\cap S \neq \emptyset) = \frac{1}{n(n-1)}C(S)$.

\spara{$\kappa$-Path centrality} Alahakoon et al. introduced the  $\kappa$-path centrality of a node~\cite{alahakoon2011k}.Their notion naturally generalizes to any subset of nodes as 
$C(S) = \sum_{s\in V} P_\kappa^s(S),$
where $P_\kappa^s(S)$ is the probability that a \emph{random simple path} of length $\kappa$ starting at $s$ will pass a node in $S$: a random simple path starting at node $s$ is generated by running a random walk that always chooses an unvisited neighbor uniformly at random, and stops after $\kappa$ of edges being traversed or if there is no unvisited neighbor. Note that $\kappa$-path centrality is a generalization of \emph{degree centrality} by letting $\kappa=1$ and considering sets of single nodes.

Obviously, $\kappa$-path centrality admits a {\he} sampler based on its definition: Let $\A$ be an algorithm that picks a node uniformly at random, and generates a random simple path of length at most $\kappa$, and outputs the generated simple path as a {\he}. Therefore, for any subset $S$ we have
$\text{Pr}_{h\sim \A}\paran{h \cap S \neq \emptyset} = \frac{1}{n} \sum_{s\in V} P_\kappa^s(S) = \frac{1}{n} C(S)$.

\section{On the $\opt_k=\Theta(n^2)$ Assumption}\label{sec:optn2}

 Recall that all additive approximation guarantees for \BW as well as all existing approximation guarantees for \ABW  involve an error term which grows as $\Theta(n^2)$.
In this Section we provide strong theoretical evidence in favor 
of the following question:  ``Why does prior work which relies heavily on the strong assumption  
that $\opt_k=\Theta(n^2)$ perform well on real-world networks?'' 
We quote Yoshida \cite{yoshida2014almost}: 
{\em This additive error should not be critical in most applications, as numerous real-world graphs have vertices of centrality $\Theta(n^2)$.}  

In general, this strong assumption is not true. The complete graph is perhaps the trivial counterexample to this assumption as the centrality of any set of nodes is zero, and  thus $\opt_k \ll \Theta(n^2)$.  Another more interesting example is the hypercube.
 
\spara{Hypercube.} The hypercube $Q_r$ (with $n=2^r$ nodes) is a \emph{vertex-transitive graph}, i.e., for each pair of nodes there is an automorphism that maps one onto the other~\cite{west2001introduction}. So, the centrality of the nodes are the same.

First consider the \textbf{coverage centrality}, and lets count how many pairs of nodes have a shortest path that pass the node $(0,\ldots, 0)$. Note that  $a, b \in \set{0,1}^r$ have a shortest path that passes the origin if and only if $\forall i\in {1,\ldots, r}: a_i b_i = 0$. To count the number of such pairs, we have to first choose a subset $I\subseteq \set{1,\ldots, r}$ of the bits that are non-zero either in $a$ or $b$, in $\binom{r}{|I|} $ ways, and partition the bits of $I$ between $a$ and $b$ (in $2^{|I|}$ ways). Therefore, the number of $(a,b) \in V^2$  pairs that their shortest path passes the node $(0,\ldots,0)$ is
$$\sum_{i=0}^r {r \choose i} 2^i = (1+2)^r =3^{\log(n)} = n^{\log(3)} = o(n^2).$$
So, the maximum coverage centrality of a node is at most $n^{\log(3)}$ (since we counted the endpoints as well, but should not have). Now by submodularity of the coverage centrality we have $\opt_k \leq k n^{\log(3)} = O(n^{1+\log(3)}) = o(n^2)$.
Finally, since the betweenness centrality is no more than the coverage centrality, we have the similar result for betweenness centrality as well.

We show that  the $\opt_k = \Theta(n^2)$ assumption holds for two important classes of graphs: graphs of bounded treewidth networks, and for certain stochastic graph models that generate scale-free and small-world networks, known as \emph{random Apollonian networks} 
\cite{frieze2014some}.

\subsection{Bounded treewidth graphs} 
We start by defining the notion of the treewidth of an undirected graph.

\begin{definition}[Treewidth]
For an undirected graph $G=(V,E)$, a \textbf{tree decomposition} is a tree $T$ with nodes $V_1,\ldots, V_r$ where each $V_i$ is (assigned to) a subset of $V$ such that (i) for every $u\in V$ there exists at least an $i$ where $u\in V_i$, (ii) if $V_i$ and $V_j$ both contains a node $u$, then $u$ belongs to every $V_k$ on the unique shortest path from $V_i$ to $V_j$ in $T$, and (iii) for every edge $(u,v)\in E$ there exists a $V_i$ such that $u,v \in V_i$. The \textbf{width} of the tree decomposition $T$ is defined as $max_{1\leq i\leq r} |V_i|-1$, and the \textbf{treewidth} of the graph $G$ is the minimum possible width of any tree decomposition of $G$.  	
\end{definition}
Now, we have the following theorem.

\begin{theorem}\label{thm:boundedtreewidth}
Let $G=(V,E)$ be an undirected, connected graph of bounded treewidth. Then $\opt_k = \Theta(n^2)$.
\end{theorem}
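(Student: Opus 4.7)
The plan is to prove the lower bound $\opt_k = \Omega(n^2)$ (the matching upper bound $\opt_k \leq n(n-1)$ is immediate since each ordered pair contributes at most $1$ to $B(S)$), and in fact to establish it already for $k = 1$, which suffices by the monotonicity of $B$. The key structural fact I will invoke is the classical balanced separator theorem for bounded-treewidth graphs: if $G$ has treewidth $w$, then there exists a set $S \subseteq V$ with $|S| \leq w+1$ such that every connected component of $G \setminus S$ has at most $n/2$ vertices. This is obtained by applying a centroid-style argument to a width-$w$ tree decomposition $T$ of $G$, exploiting the property that for any bag $V_c$ of $T$, the vertices of $V \setminus V_c$ partition cleanly among the subtrees of $T \setminus \{V_c\}$ because the bags containing any fixed vertex form a connected subtree of $T$.

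Fix such a separator $S$, and let $C_1, \ldots, C_t$ be the components of $G \setminus S$ with sizes $n_i = |C_i|$, so $\sum_i n_i \geq n - (w+1)$ and $n_i \leq n/2$. The number of unordered cross-component pairs is
$$N = \sum_{i<j} n_i n_j = \tfrac{1}{2}\Big[\big(\textstyle\sum_i n_i\big)^2 - \sum_i n_i^2\Big] \geq \tfrac{1}{2}\big[(n-w-1)^2 - (n/2)(n-w-1)\big] = \Omega(n^2),$$
where we used $\sum_i n_i^2 \leq (\max_i n_i)\sum_i n_i$. Moreover, for every cross-component pair $(s,t)$, both endpoints lie outside $S$, and any $s$-$t$ walk must cross $S$, so every $s$-$t$ shortest path contains at least one internal vertex from $S$.

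To extract a single vertex of high betweenness, I will double-count. For each cross-component pair $(s,t)$ we have $\sum_{v \in S} \sigma_{s,t}(v) \geq \sigma_{s,t}$, hence $\sum_{v \in S} \sigma_{s,t}(v)/\sigma_{s,t} \geq 1$. Swapping the order of summation and bounding by the full betweenness yields
$$\sum_{v \in S} B(v) \;\geq\; \sum_{(s,t) \text{ cross}} \; \sum_{v \in S} \frac{\sigma_{s,t}(v)}{\sigma_{s,t}} \;\geq\; N \;=\; \Omega(n^2),$$
so by pigeonhole some $v^{*} \in S$ satisfies $B(v^{*}) \geq N/(w+1) = \Omega(n^2)$, treating $w$ as a constant. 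Hence $\opt_1 \geq B(v^{*}) = \Omega(n^2)$, and therefore $\opt_k = \Theta(n^2)$ for every fixed $k \geq 1$. The main obstacle is invoking, or if the paper wants self-containedness reproving, the balanced separator theorem for bounded-treewidth graphs; once that is in hand, everything downstream is routine double-counting together with the observation that a constant-size separator forces $\Omega(n^2)$ pairs to each contribute $\Omega(1)$ of betweenness in aggregate to $S$.
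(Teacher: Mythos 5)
Your proof is correct and follows essentially the same route as the paper's: invoke the balanced separator of size at most $w+1$ for bounded-treewidth graphs, observe that $\Omega(n^2)$ shortest paths must pass through it, and pigeonhole to a single vertex of betweenness $\Omega(n^2)$, giving $\opt_1 = \Omega(n^2)$ and hence $\opt_k = \Theta(n^2)$. The only difference is that you spell out the cross-component pair count and the double-counting step, which the paper leaves implicit.
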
 

\begin{proof} 
Suppose $w$ is the treewidth of $G$, which is a constant (bounded). It is known that any graph of treewidth $w$ has a balanced vertex separator\footnote{Means a set of nodes $\Gamma$ such that $V\setminus \Gamma = A \cup B$, where $A$ and $B$ are disjoint and both have size $\Theta(n)$.\hide{ Note that $A$ and $B$ do not need to be connected subgraphs.}}
 $S \subseteq V$ of size at most $w+1$~\cite{robertson1986graph}. This implies that $O(n^2)$ shortest paths pass through $S$. Since $|S|=w+1=\Theta(1)$, there exists at least one vertex $u \in S$ such that $B(u)=\Theta(n^2)$. Hence, $\opt_1=\Theta(n^2)$, and since $\opt_1 \leq \opt_k$ we have $\opt_k=\Theta(n^2)$.
\end{proof}  

It is worth emphasizing that the 
classical Barab{\'a}si-Albert random tree model \cite{barabasi1999emergence,mori2005maximum} belongs to this category. For a recent study of the treewidth parameter on real-world networks, see \cite{adcock2014tree}. 

 

\subsection{Scale-free, small-world networks} We show that $\opt_k=\Theta(n^2)$ for random Apollonian networks. 
 Our proof for the latter model relies on a technique developed by Frieze and Tsourakakis \cite{frieze2014some} and carries over for random unordered increasing $k$-trees \cite{gao2009degree}. A \emph{random Apollonian network} (RAN) is a network that is generated iteratively. The RAN generator takes as input  the desired number of nodes $n\geq 3$ and runs as follows: 
\squishlist
 \item Let $G_3$ be the triangle graph, whose nodes are $\set{1,2,3}$, and drawn in the plane. 
 \item {\bf for} $t \leftarrow 4$ to $n$: 
\squishlist
  \item  Sample a face $F_t=(i,j,k)$ of the planar  graph $G_{t-1}$ uniformly at random, except for the outer face. 
  \item  Insert the new node $t$ inside this face connecting it to $i,j,k$. 
\squishend
\squishend
 
Figure~\ref{fig:apollonian}(a) shows an instance of a RAN for $n=100$. 
The triangle is originally 
 embedded on the plane  as an equilateral triangle. Also, 
 when a new node $t$ chooses its face $(i,j,k)$ it is embedded in the barycenter of the corresponding triangle and connects to $i,j,k$ 
via the  straight lines: $(i,t), (j,t)$, and $(k,t)$. It has been shown that the diameter of
a RAN is $O(\log(n))$ with high probability \cite{frieze2014some,ebrahimzadeh2013longest}.

\begin{figure}[t]
\centering
\begin{tabular}{@{}c@{}@{\ }c@{}}
 \includegraphics[width=0.33\textwidth]{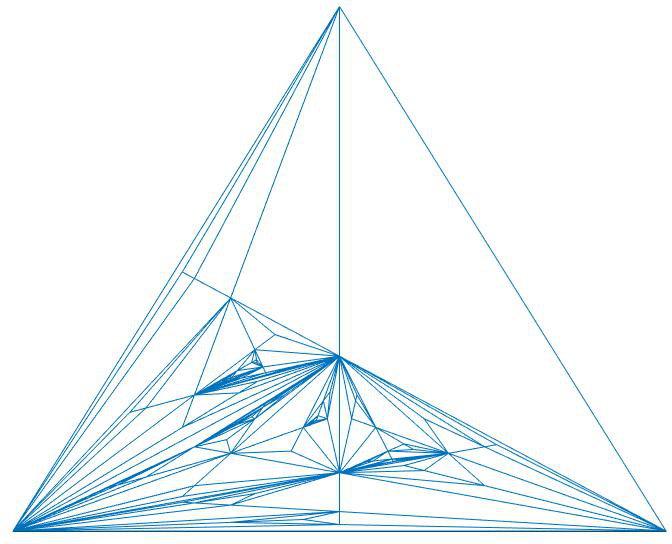} &
 \includegraphics[width=0.63\textwidth]{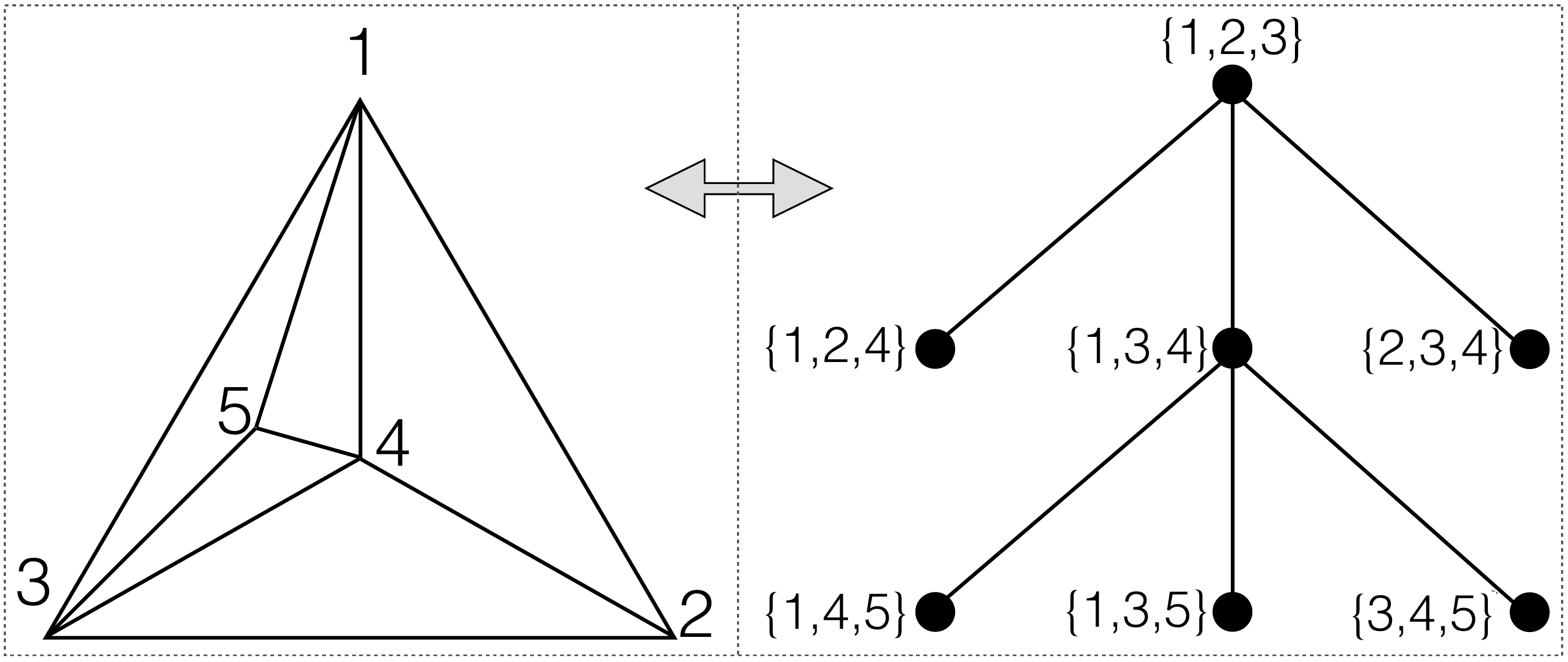}\\
(a) & (b) \\ 
\end{tabular}
\caption{\label{fig:apollonian} (a)    An instance of a random Apollonian network for $n=100$. (b) Bijection between RANs and random ternary trees.}
\vspace{-4mm}
\end{figure}

At any step, we refer to set of candidate faces as the set of active faces. 
Note that there is a bijection between  the active faces  of a RAN 
and the leaves of random ternary trees  as illustrated in Figure~\ref{fig:apollonian}(b), and noticed first by \cite{frieze2014some}.  

We shall make use of the following formulae for the number of nodes ($v_t$), edges ($e_t$) and faces ($f_t$; excluding the outer face) after  $t$ steps in a RAN $G_t$: 
$$v_t=t,~~ e_t=3t-6,~~ f_t=2t-5.$$  
Note that  $f_t = \Theta(v_t)= \Theta(t)$.

%
%
%
%

\hide{
 \begin{figure*}[!ht]
\centering
\includegraphics[width=0.35\textwidth]{figures/bijection} 
\caption{\label{fig:bijection}  RANs as random ternary trees. Source~\cite{frieze2014some}. }
\end{figure*}
}

\hide{
\noindent Before we present the proof in detail it is worth sketching its main idea: based on the bijection between RANs and random ternary trees we are able to find a vertex in the tree that  }

\begin{theorem} 
Let $G$ be a  RAN of size $n$. Then $\opt_k=\Theta(n^2)$.
\end{theorem}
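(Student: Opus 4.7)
My plan is to prove matching upper and lower bounds on $\opt_k$. The upper bound $\opt_k = O(n^2)$ is immediate since $B(S) \leq \sum_{s,t \in V} 1 \leq n^2$ for any $S$. For the lower bound it suffices to establish $\opt_1 = \Omega(n^2)$, since $\opt_k \geq \opt_1$ for every $k \geq 1$. Here I would leverage the bijection in Figure~\ref{fig:apollonian}(b) between the RAN $G$ and an associated ternary tree $T$, whose $n-3$ internal nodes are indexed by the RAN vertices $4, \ldots, n$ and whose $L := 2n-5$ leaves are indexed by the active faces. For any internal node $v$ of $T$, let $\{a_v, b_v, c_v\}$ denote the face of $G$ into which vertex $v$ was inserted.

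The first substantive step is a short induction on insertion time showing that $\{a_v, b_v, c_v\}$ is a vertex separator of $G$: the ``inside'' component consists of $v$ together with all vertices corresponding to internal descendants of $v$ in $T$, and no edge ever crosses directly to the ``outside'', because every vertex inserted into a sub-face of $\{a_v, b_v, c_v\}$ has all three of its neighbors lying in $\{a_v, b_v, c_v, v\} \cup \text{inside}$.

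Next I would produce a $v$ for which this separator is balanced, via a deterministic centroid walk on $T$: starting from the root, repeatedly descend to the child with the largest number of leaf descendants, and stop at the first $v$ with $|\text{leaves}(T_v)| \leq 3L/4$. Because the parent still had more than $3L/4$ leaves split among three children, the chosen $v$ also satisfies $|\text{leaves}(T_v)| > L/4$. Converting via the ternary-tree identity $|\text{internal}(T_v)| = (|\text{leaves}(T_v)|-1)/2$, both the inside and the outside of $\{a_v, b_v, c_v\}$ in $G$ have size $\Theta(n)$.

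Finally, every ordered pair $(s,t)$ with $s$ inside and $t$ outside is non-adjacent (any such edge would cross the separator), so each $s$--$t$ shortest path contains at least one internal vertex from $\{a_v, b_v, c_v\}$, giving
\[
  \sum_{u \in \{a_v, b_v, c_v\}} B(u) \;\geq\; \#\{(s,t)\,:\, s \in \text{inside},\ t \in \text{outside}\} \;=\; \Omega(n^2),
\]
and pigeonholing among the three vertices yields $\opt_1 = \Omega(n^2)$. The step requiring real care will be the inductive separator claim — namely, verifying that RAN edges never skip over an enclosing face; after that, the centroid step and the pigeonhole count are routine. It is worth noting that the entire argument is deterministic and uses no randomness of the RAN, consistent with Theorem~\ref{thm:boundedtreewidth} since every RAN is a $3$-tree and hence has treewidth $3$.
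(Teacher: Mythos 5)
Your proposal is correct and follows essentially the same route as the paper's proof: both exploit the RAN--ternary-tree bijection, identify each tree node with a $3$-vertex face separator, find a balanced separator by greedily descending to the child with the most leaves, and then pigeonhole the $\Theta(n^2)$ crossing pairs among the three separator vertices to get $\opt_1=\Theta(n^2)$. Your write-up is somewhat more explicit about the separator induction and the leaf-to-vertex count conversion, but the argument is the same.
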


\begin{proof} 

Note that removing a node from the random ternary tree $T=(V_T,E_T)$ (as in Fig.~\ref{fig:apollonian}(b)) corresponds to removing three nodes from $G$, corresponding to a face $F$ that existed during the RAN generation process. Clearly, the set of these three nodes is a  vertex separator that separates the nodes inside and outside of $F$. Therefore, all  nodes in the tree except for the root $r$ correspond to a vertex separator in $G$. Observe that the leaves in $T$ correspond to the set of active faces, and thus, $T$ has $f_n=2n-5$ leaves after $n$ steps. 

We claim that there exists an edge $(F,F') \in E_T$ (recall that the nodes of $T$ are active faces during the generating process of $G$)  such that the removal of $e$ from $E_T$ results in two subtrees with $\Theta(n)$ leaves. We define $g: V_T \rightarrow \field{Z}$ to be the function that returns for a node $v \in V_T$ the number of leaves of the subtree rooted at $v$. Hence, $g(r) = 2n-5, g(u)=1$ for any leaf $u$.  
To find such an edge consider the following algorithm. We start from the root $r$ of $T$ descending to the bottom according to the following rule which creates a sequence of nodes $u_0=r,u_1,u_2,\ldots$: we set $u_{i+1}$ to be the child with most leaves among the tree subtrees rooted at the three children of $u_i$. We stop when we first find a node $u_i$ such that $g(u_i) \geq cn$ and $g(u_{i+1}) < cn$ for some constant $c$. Clearly, $g(u_{i+1}) \geq cn/3=\Theta(n)$, by pigeonhole principle. So, let $F = u_i$ and $F' = u_{i+1}$. 

Now suppose $F' = \set{x,y,z}$, and consider removing $x,y$, and $z$ from $G$. Clearly, $F' \neq r$ as $F'$ is a child of $F$.  Also, due to the construction of a RAN, after removing $x,y,z$, there are exactly two connected components, $G_1$ and $G_2$. Also, since $cn/3\leq g(F')<cn$, the number of nodes in each of $G_1$ and $G_2$ is $\Theta(n)$.

\hide{ To see why notice that since the number of leaves in each of the tree components $T_1,T_2$ of $T\backslash (s,t)$ is $\Theta(n)$, there exist at least $\Theta(n/3)$ ancestors which further implies that there exists $\Theta(n/9)=\Theta(n)$ vertices in $G_1,G_2$.}

\hide{
, we claim that the two resulting subtrees  induced t

have $\Theta(n)$ leaves each, implies that the corresponding graph components have  $\Theta(n)$ vertices each. Notice that this claim is non-trivial as in principle a subtree can span $\Theta(n)$ leaves and have diameter $\Theta(n)$. However, our claim is directly implied by the fact that the every vertex that is not a leaf has  3 children and the diameter of $T$ is $\Theta(\log{n})$ .  }

Finally, observe that at least one of the three nodes $x,y,z$ must  have betweenness centrality score $\Theta(n^2)$, as the size of the separator is 3 and there exist $\Theta(n^2)$ paths that pass through it (connecting the nodes in $G_1$ and in $G_2$). Therefore, $\opt_1 \geq \max\set{B(x),B(y),B(z)} = \Theta(n^2)$, and since $\opt_1 \leq \opt_k$ we have $\opt_k = \Theta(n^2)$.
\end{proof} 

We believe that this type of coupling can be used to prove similar results for other stochastic graph models.

\section{Experimental Results}
\label{sec:exp} 
 
 In this section we present our experimental results. We first start by comparing \algoname (our sampling based algorithm) with  \exname (the exhaustive algorithm defined in Sect.~\ref{sec:related}) and show that the centrality of \algoname's output is close to the centrality of \exname's output, with great speed-up. This part is done for 3 small graphs as \exname cannot scale to larger graphs. 
 
We then compare our sampling method with the method presented in ~\cite{yoshida2014almost}. We show that, although our method stores less per each {\he}, it does not loose its accuracy.

Equipped with our scalable algorithm, \algoname, we will be able to focus on some of the interesting characteristics of the central nodes: (i) How does their centrality change over time in evolving graphs? (ii) How influential are they? and (iii) How does the size of the largest connected component change after removing them?

In our experiments, we assume the graphs are simple (no self-loop or parallel edge) but the edges can be directed. We used publicly available datasets in our experiments\footnote{\url{http://snap.stanford.edu} and \url{http://konect.uni-koblenz.de/networks/dblp_coauthor}}.
\algoname is implemented in {\sc C++}.



\subsection{Accuracy and time efficiency} 

Table~\ref{table:comparison} shows the results of \exname 
and \algoname on three graphs for which we were able to run 
\exname. The fact that \exname is able to run only on 
networks of this scale indicates already the value of \algoname's scalability. As we can see, \algoname results in significant speedups and 
negligible loss of accuracy. 


In Table~\ref{table:comparison} the centrality of the output sets and the speed up gained by \algoname is given, and as shown, \algoname gives a great speedup with almost the same quality (i.e., the centrality of the output) of \exname. The centrality of the outputs are \emph{scaled} by $\frac{1}{n(n-1)}$, where $n$ is the number of nodes in each graph.
Motivated by the result in Sect.~\ref{sec:optn2} we run \algoname using $k\log(n)/\epsilon^2$ {\he}s for $\epsilon=0.1$, and for  each case, ten times (averages are reported). For sake of comparison, these experiments  were executed using a single machine with  Intel Xeon {\sc cpu} at 
2.83GHz and with 36GB {\sc ram}.

\begin{table}[!htbp]
\centering
\small
\caption{\small \algoname vs. \exname: centralities and speedups.}
\label{table:comparison}

\begin{tabular}{lccc|c|c|r}
\cline{5-6}
                                                      &                                            &                                             &     & \multicolumn{2}{c|}{\textbf{Algorithms}} &                              \\ \hline
\multicolumn{1}{|l|}{\textbf{GRAPHS}}                 & \multicolumn{1}{l|}{\#nodes}               & \multicolumn{1}{l|}{\#edges}                & $k$ & \exname            & \algoname           & \multicolumn{1}{r|}{speedup} \\ \hline
\multicolumn{1}{|l|}{\multirow{3}{*}{\textbf{ca-GrQd}}}        & \multicolumn{1}{c|}{\multirow{3}{*}{5242}} & \multicolumn{1}{c|}{\multirow{3}{*}{14496}} & 10  & 0.242              & 0.241               & \multicolumn{1}{c|}{2.616}   \\ \cline{4-7} 
\multicolumn{1}{|l|}{}                                & \multicolumn{1}{l|}{}                      & \multicolumn{1}{l|}{}                       & 50  & 0.713              & 0.699               & \multicolumn{1}{c|}{2.516}   \\ \cline{4-7} 
\multicolumn{1}{|l|}{}                                & \multicolumn{1}{l|}{}                      & \multicolumn{1}{l|}{}                       & 100 & 0.974              & 0.951               & \multicolumn{1}{c|}{2.217}   \\ \hline
\multicolumn{1}{|l|}{\multirow{3}{*}{\textbf{p2p-Gnutella08}}} & \multicolumn{1}{c|}{\multirow{3}{*}{6301}} & \multicolumn{1}{c|}{\multirow{3}{*}{20777}} & 10  & 0.013              & 0.011               & \multicolumn{1}{c|}{6.773}   \\ \cline{4-7} 
\multicolumn{1}{|l|}{}                                & \multicolumn{1}{l|}{}                      & \multicolumn{1}{l|}{}                       & 50  & 0.036              & 0.035               & \multicolumn{1}{c|}{6.478}   \\ \cline{4-7} 
\multicolumn{1}{|l|}{}                                & \multicolumn{1}{l|}{}                      & \multicolumn{1}{l|}{}                       & 100 & 0.053              & 0.051               & \multicolumn{1}{c|}{6.117}   \\ \hline
\multicolumn{1}{|l|}{\multirow{3}{*}{\textbf{ca-HepTh}}}       & \multicolumn{1}{c|}{\multirow{3}{*}{9877}} & \multicolumn{1}{c|}{\multirow{3}{*}{25998}} & 10  & 0.165              & 0.164               & \multicolumn{1}{c|}{4.96}    \\ \cline{4-7} 
\multicolumn{1}{|l|}{}                                & \multicolumn{1}{l|}{}                      & \multicolumn{1}{l|}{}                       & 50  & 0.498              & 0.497               & \multicolumn{1}{c|}{4.729}   \\ \cline{4-7} 
\multicolumn{1}{|l|}{}                                & \multicolumn{1}{l|}{}                      & \multicolumn{1}{l|}{}                       & 100 & 0.747              & 0.745               & \multicolumn{1}{c|}{4.473}   \\ \hline
\end{tabular}

\end{table}

\subsection{Comparison against \cite{yoshida2014almost}}
\label{sec:compare}
 We compare our method against Yoshida's algorithm (\ya) \cite{yoshida2014almost} on four undirected graphs (as \ya runs on undirected graphs). We use Yoshida's implementation which he kindly provided to us.  Note that Yoshida's algorithm applies a different sampling method than ours: it is based on sampling random $s$-$t$ pairs of nodes and assigning weights to \textit{every} node that is on any $s$-$t$ shortest  path, whereas in our method we only pick one randomly chosen $s$-$t$ shortest path with no weight on the nodes.


%

\ya and \algoname use $\frac{2\log(2n^3)}{\epsilon^2}$ and $\frac{k\log(n)}{\epsilon^2}$ samples, respectively, where $n$ is the number of nodes in the graph, and we set $\epsilon=0.1$.
We also run a variation of our algorithm, $\algoname_=$,
which is essentially \algoname but with $\frac{2\log(2n^3)}{\epsilon^2}$ samples. This allows a more fair comparison between the methods. 


$ $

Table~\ref{table:compare-with-yoshida} shows the estimated centrality of the output sets, and the number of samples each algorithm uses. Surprisingly, \ya does not outperform 
 $\algoname_=$, despite the fact that it maintains extra information.
Finally, our proposed algorithm \algoname is consistently better than the other two algorithms.
%

\begin{table}[!htbp]
\centering
\small
\caption{Comparison against \ya}
\label{table:compare-with-yoshida}
\begin{tabular}{lc|l|l|l|c|l|l|}
\cline{3-8}
                                                             & \multicolumn{1}{l|}{}             & \multicolumn{3}{c|}{\textbf{Betw. Centrality}}   & \multicolumn{3}{c|}{\textbf{\# of Samples}}                           \\ \hline
\multicolumn{1}{|l|}{\textbf{GRAPHS}}                        & \multicolumn{1}{l|}{\textit{$k$}} & \textit{\texttt{Y-ALG}} & \textit{$\algoname_{=}$} & \textit{\algoname} & \multicolumn{1}{l|}{\textit{\texttt{Y-ALG}}} & \textit{$\algoname_{=}$} & \textit{\algoname} \\ \hline
\multicolumn{1}{|l|}{\multirow{3}{*}{\textbf{CA-GrQc}}}      & \textit{10}                       & 0.208            & 0.214          & 0.215        & \multicolumn{2}{c|}{\multirow{3}{*}{5278}}             & 8565         \\ \cline{2-5} \cline{8-8} 
\multicolumn{1}{|l|}{}                                       & \textit{50}                       & 0.484            & 0.483          & 0.49         & \multicolumn{2}{c|}{}                                  & 42822        \\ \cline{2-5} \cline{8-8} 
\multicolumn{1}{|l|}{}                                       & \textit{100}                      & 0.569            & 0.568          & 0.577        & \multicolumn{2}{c|}{}                                  & 85643        \\ \hline
\multicolumn{1}{|l|}{\multirow{3}{*}{\textbf{CA-HepTh}}}     & \textit{10}                       & 0.151            & 0.151          & 0.154        & \multicolumn{2}{c|}{\multirow{3}{*}{5658}}             & 9198         \\ \cline{2-5} \cline{8-8} 
\multicolumn{1}{|l|}{}                                       & \textit{50}                       & 0.403            & 0.4            & 0.409        & \multicolumn{2}{c|}{}                                  & 45989        \\ \cline{2-5} \cline{8-8} 
\multicolumn{1}{|l|}{}                                       & \textit{100}                      & 0.534            & 0.533          & 0.547        & \multicolumn{2}{c|}{}                                  & 91978        \\ \hline
\multicolumn{1}{|l|}{\multirow{3}{*}{\textbf{ego-Facebook}}} & \textit{10}                       & 0.924            & 0.932          & 0.933        & \multicolumn{2}{c|}{\multirow{3}{*}{5121}}             & 8304         \\ \cline{2-5} \cline{8-8} 
\multicolumn{1}{|l|}{}                                       & \textit{50}                       & 0.959            & 0.957          & 0.959        & \multicolumn{2}{c|}{}                                  & 41519        \\ \cline{2-5} \cline{8-8} 
\multicolumn{1}{|l|}{}                                       & \textit{100}                      & 0.962            & 0.96           & 0.964        & \multicolumn{2}{c|}{}                                  & 83038        \\ \hline
\multicolumn{1}{|l|}{\multirow{3}{*}{\textbf{email-Enron}}}  & \textit{10}                       & 0.329            & 0.335          & 0.335        & \multicolumn{2}{c|}{\multirow{3}{*}{6445}}             & 10511        \\ \cline{2-5} \cline{8-8} 
\multicolumn{1}{|l|}{}                                       & \textit{50}                       & 0.644            & 0.646          & 0.65         & \multicolumn{2}{c|}{}                                  & 52552        \\ \cline{2-5} \cline{8-8} 
\multicolumn{1}{|l|}{}                                       & \textit{100}                      & 0.754            & 0.756          & 0.762        & \multicolumn{2}{c|}{}                                  & 105104       \\ \hline
\end{tabular}
\end{table}

\subsection{Applications}
\label{sec:applications}

For the next three experiments, we consider three more larger graphs that \algoname can handle due to its scalability: {\sc email-Enron}, {\sc Brightkite}, and {\sc Epinions} networks. They  consist of   (36\,692, 183\,831), (58\,228,214\,078), and (75\,879, 508\,837)  nodes, and edges respectively. These experiments are based on orderings defined over the set of nodes as follows: we generate $100\log(n)/\epsilon^2$ hyper-edges, where $n$ is the number of nodes, and $\epsilon=0.25$. Then, we order the nodes based on the order \algoname picks the nodes.

For the sake of comparison, we ran \algoname using the coverage and $\kappa$-path (for $\kappa=2$) centralities, since both of them admit {\he} sampler as we showed in Sect.~\ref{sec:beyond}. Also, we considered a fourth centrality that we call triangle centrality, where the centrality of a set of nodes $S$ equals to the number of triangles that intersect with $S$. For the triangle centrality, we run \exname as computing this centrality is easy and scalable to large graphs\footnote{\exname for the triangle centrality, at every iteration simply chooses a node that is incident with more number of new triangles.}. All these experiments are run ten times, and we report the average values.
 

\subsection*{Time evolving networks}
The study of empirical properties of time-evolving real-world networks has attracted a lot of interest over the recent years, see for example
\cite{leskovec2007graph,mitzenmacher2015scalable}.
In this section we investigate how \BW of the most central nodes changes as a function of time.

We study two temporal datasets, the \textsf{DBLP} \footnote{Timestamps are in Unix time and can be negative.} and \textsf{Autonomous Systems} (AS) datasets. We also generate stochastic Kronecker graphs on $2^i$ vertices for $i\in\set{8,\ldots,20}$, using $\left(\begin{smallmatrix} 0.9 & 0.5\\ 0.5 & 0.2 \end{smallmatrix}\right)$ as the core-periphery seed matrix. We assume that the $i$-th time snapshot for Kronecker graphs corresponds to $2^{i}$ vertices, for $i=8,\ldots,20$. Note that in these evolving sets, the number of nodes also increases along with new edges. Also, note that the main difference between \textsf{DBLP} and \textsf{Autonomous Systems} is that for \textsf{DBLP} edges and nodes only can be added, where in \textsf{Autonomous Systems} nodes and edges can be increased and decreased.

The results are plotted in logarithmic scale (Fig.~\ref{fig:timeevolving}), and as shown, we observe that the centrality of the highly central set of nodes increases. Also, we observe that the model of stochastic Kronecker graphs behaves similar to the real-world evolving networks with respect to these parameters.

\begin{figure}[h]
\centering
\begin{tabular}{@{}c@{}@{\ }c@{}@{\ }c@{}}
\includegraphics[width=0.27\textwidth]{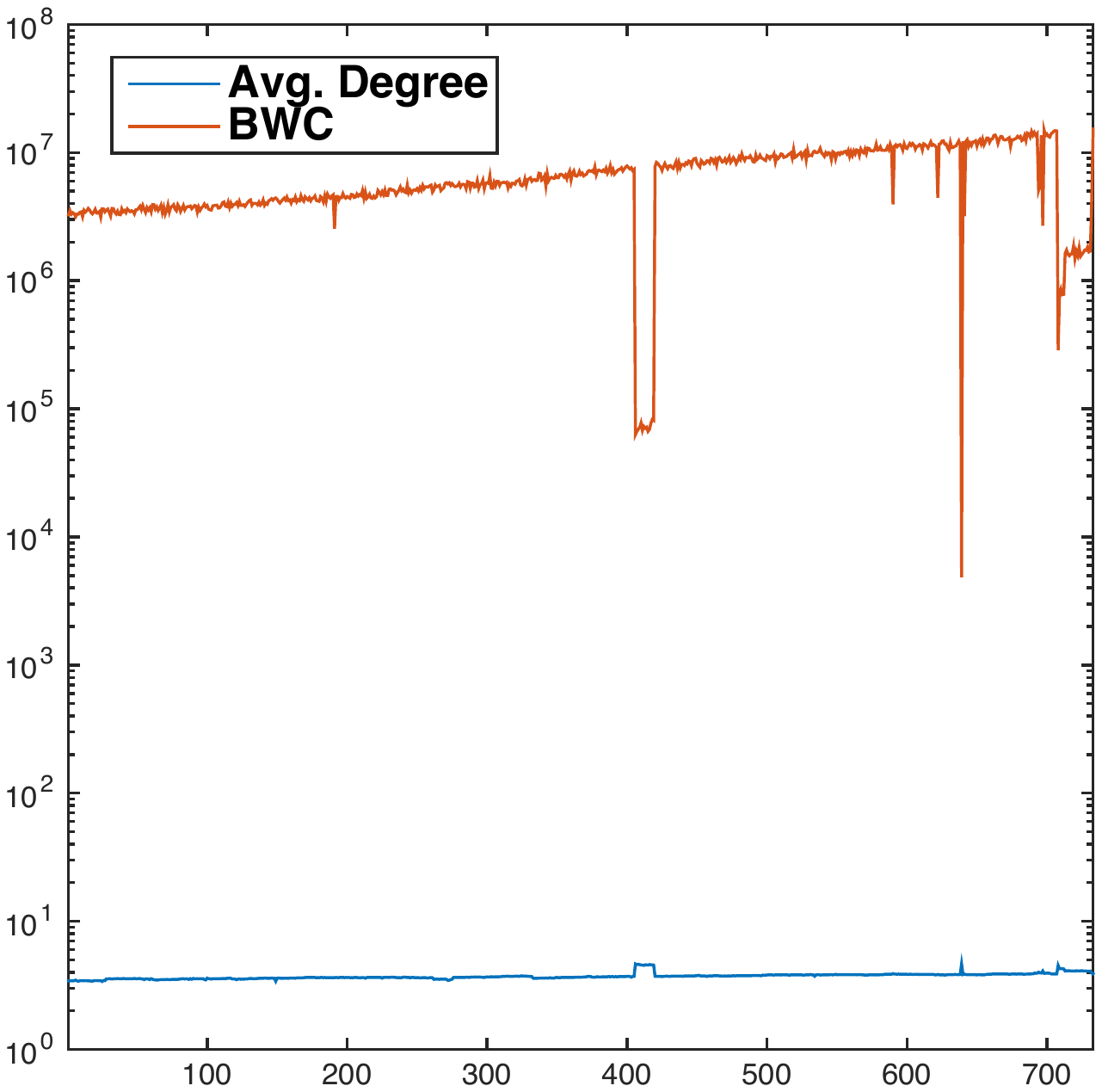} &
\includegraphics[width=0.26\textwidth]{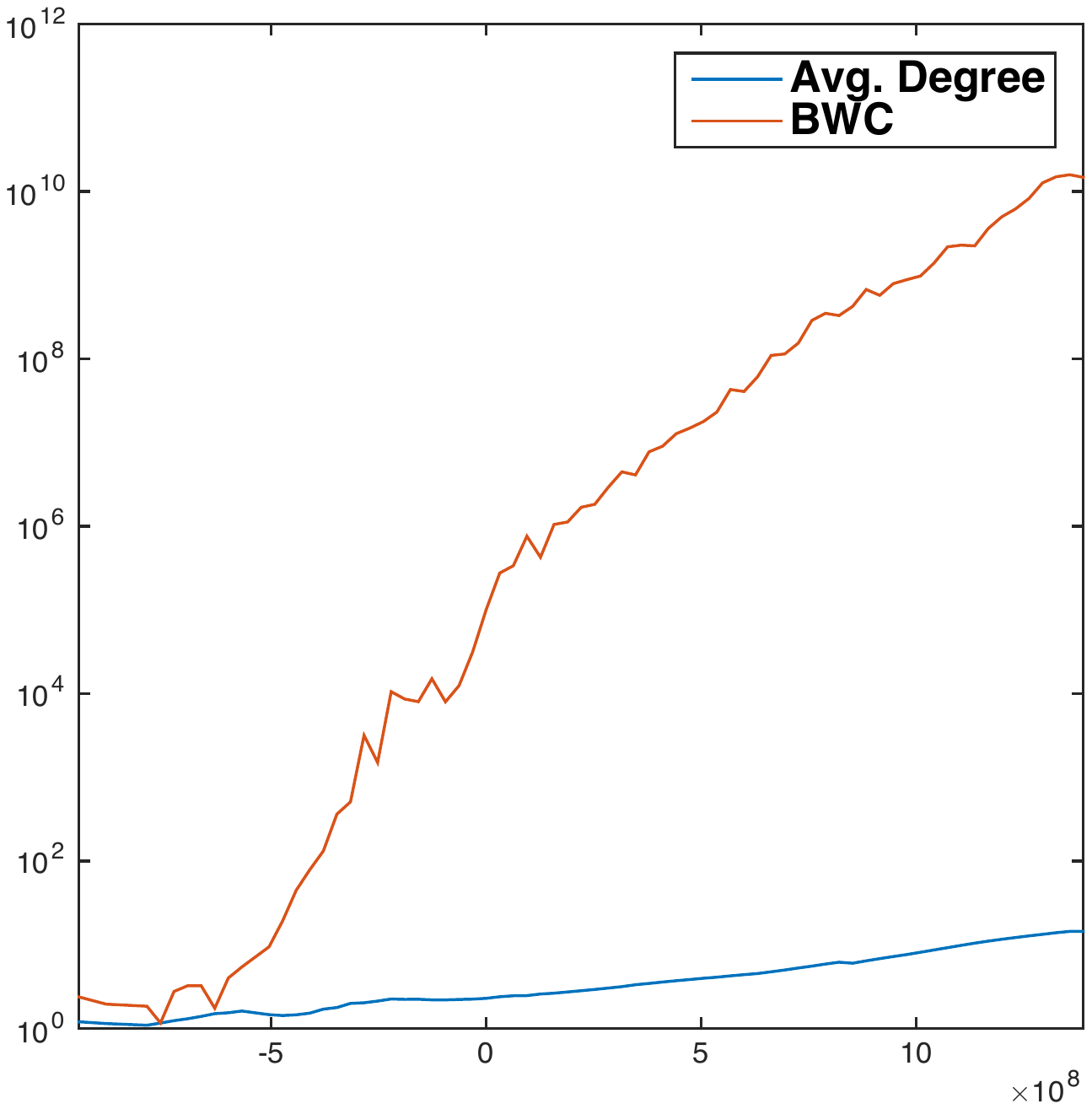} & 
\includegraphics[width=0.27\textwidth]{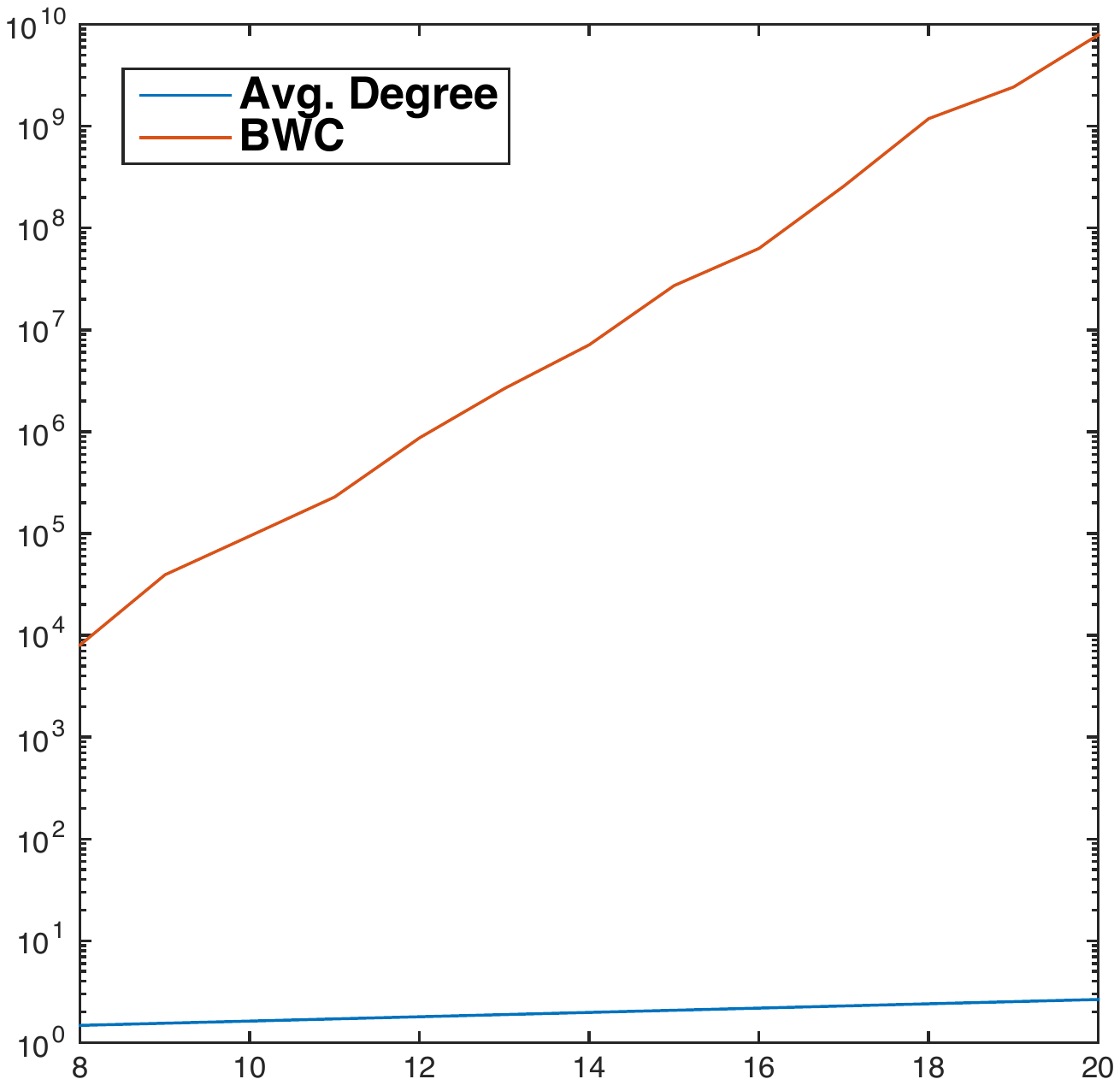} \\
(a) AS: $k=1$ & (b) DBLP: $k=1$ & (c) KG: $k=1$\\  
 \includegraphics[width=0.27\textwidth]{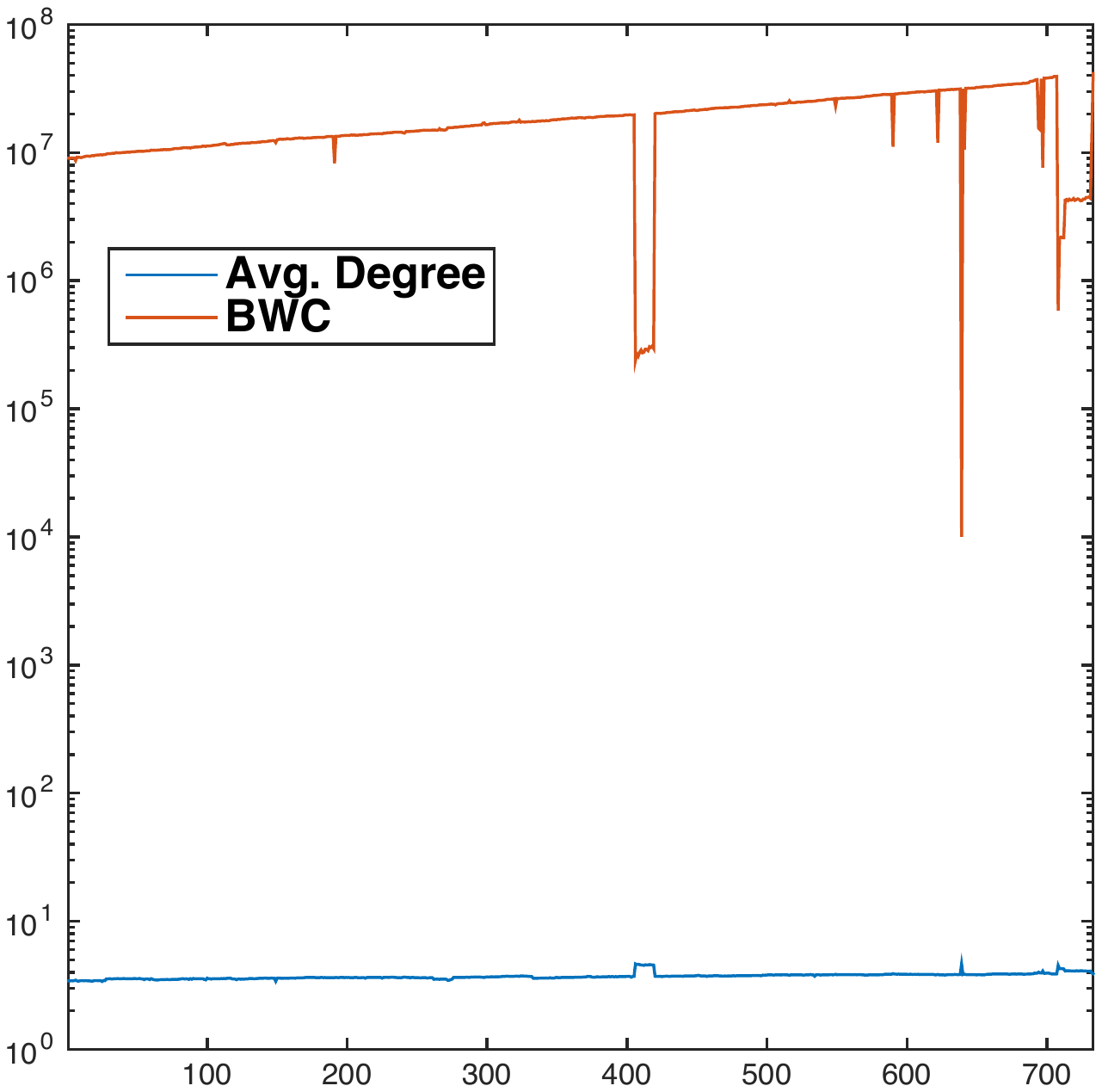}  &
\includegraphics[width=0.26\textwidth]{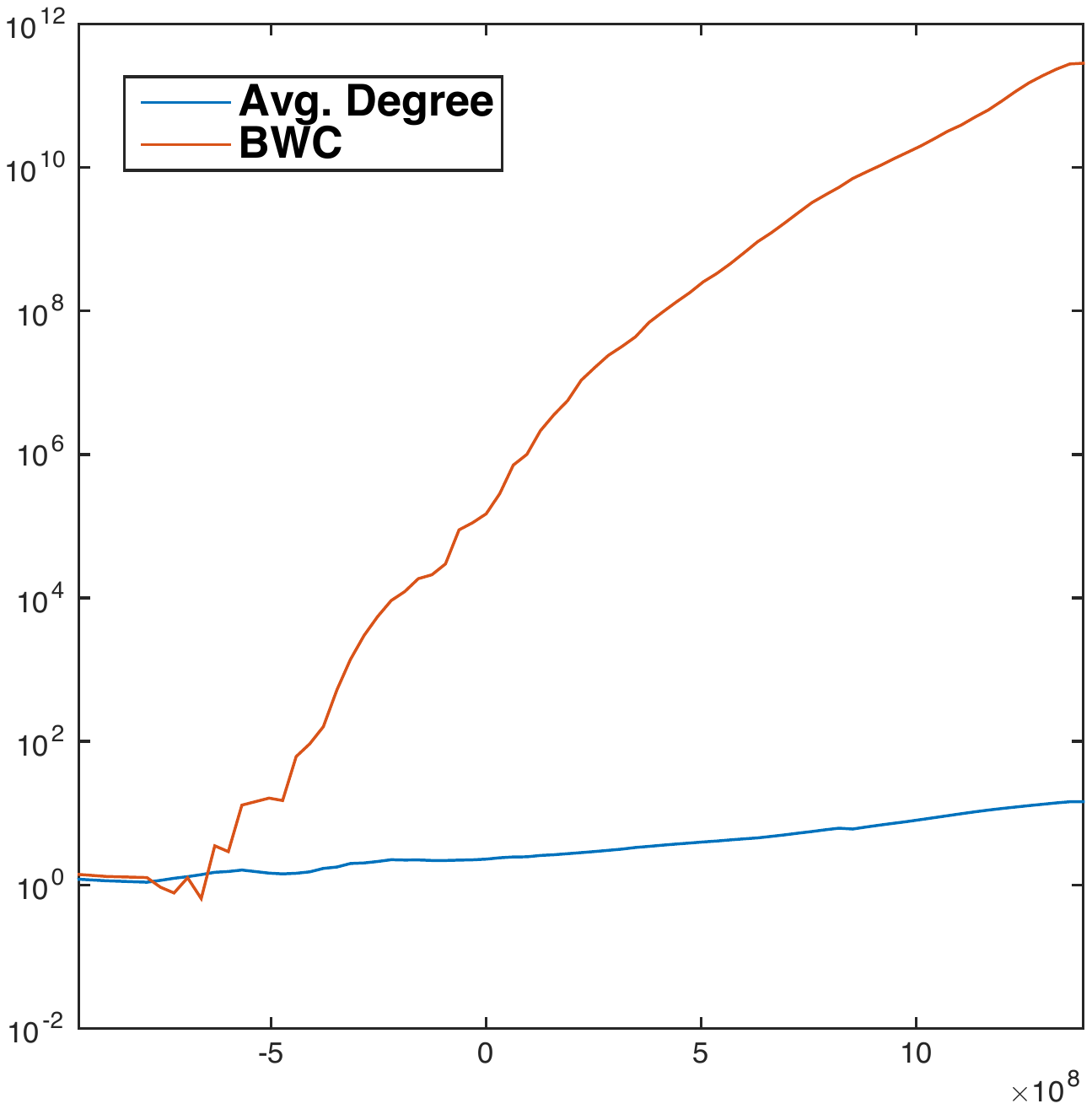} & 
\includegraphics[width=0.27\textwidth]{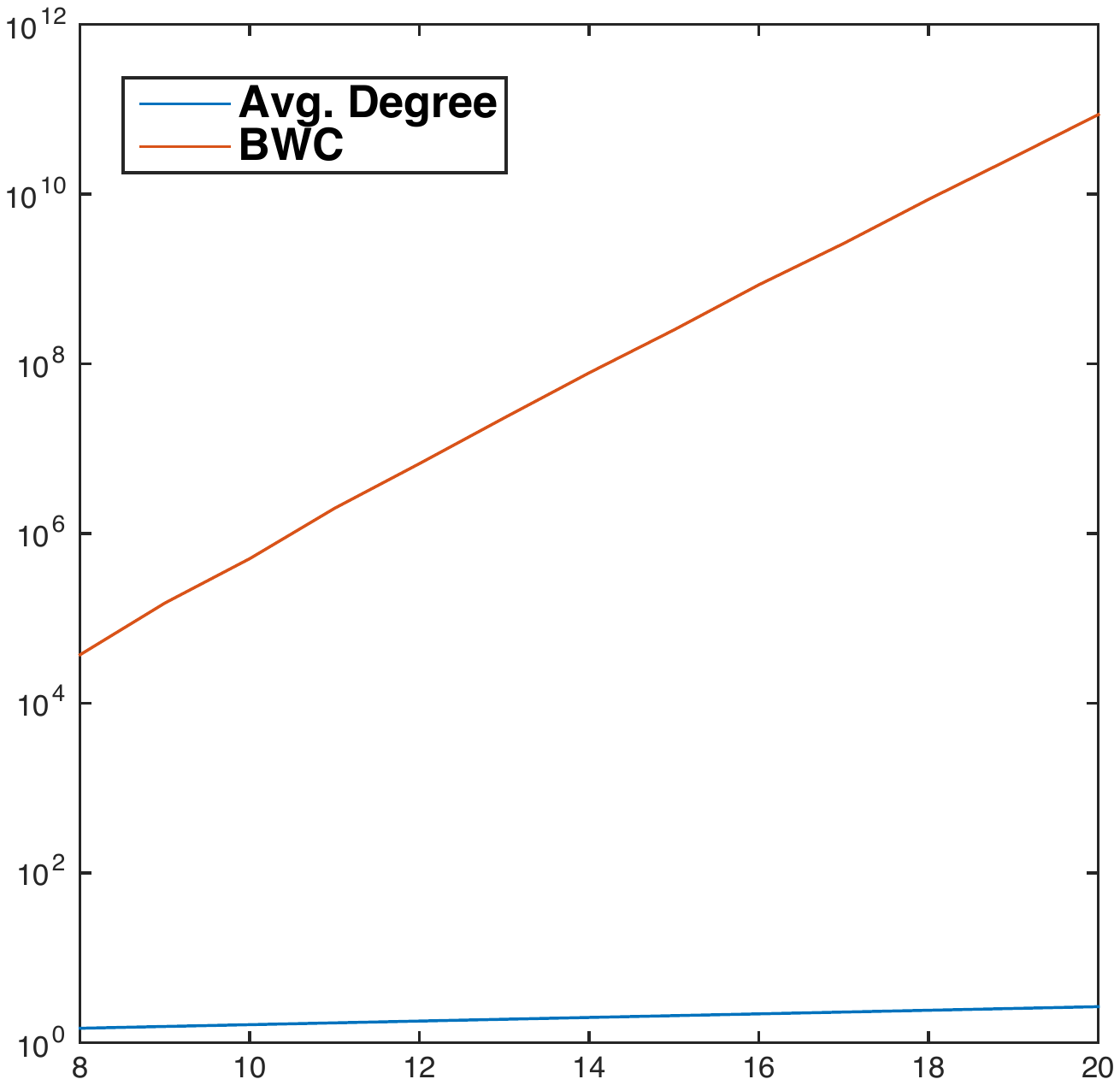} \\
(d) AS: $k=50$ & (e) DBLP: $k=50$ & (f) KG: $k=50$\\ 
\end{tabular}
\caption{\label{fig:timeevolving} \small Largest betweenness centrality score for $k=1$, and $k=50$,  number of nodes, edges and average degree versus time on the (i) 
\textsf{Autonomous systems}  (a),(d) (ii) \textsf{DBLP} dataset  (b),(e)
and (iii) stochastic Kronecker graphs (c),(f).}
\end{figure}

\begin{figure}[h]
\centering
\begin{tabular}{@{}c@{}@{\ }c@{}@{\ }c@{}}
\includegraphics[width=0.30\textwidth]{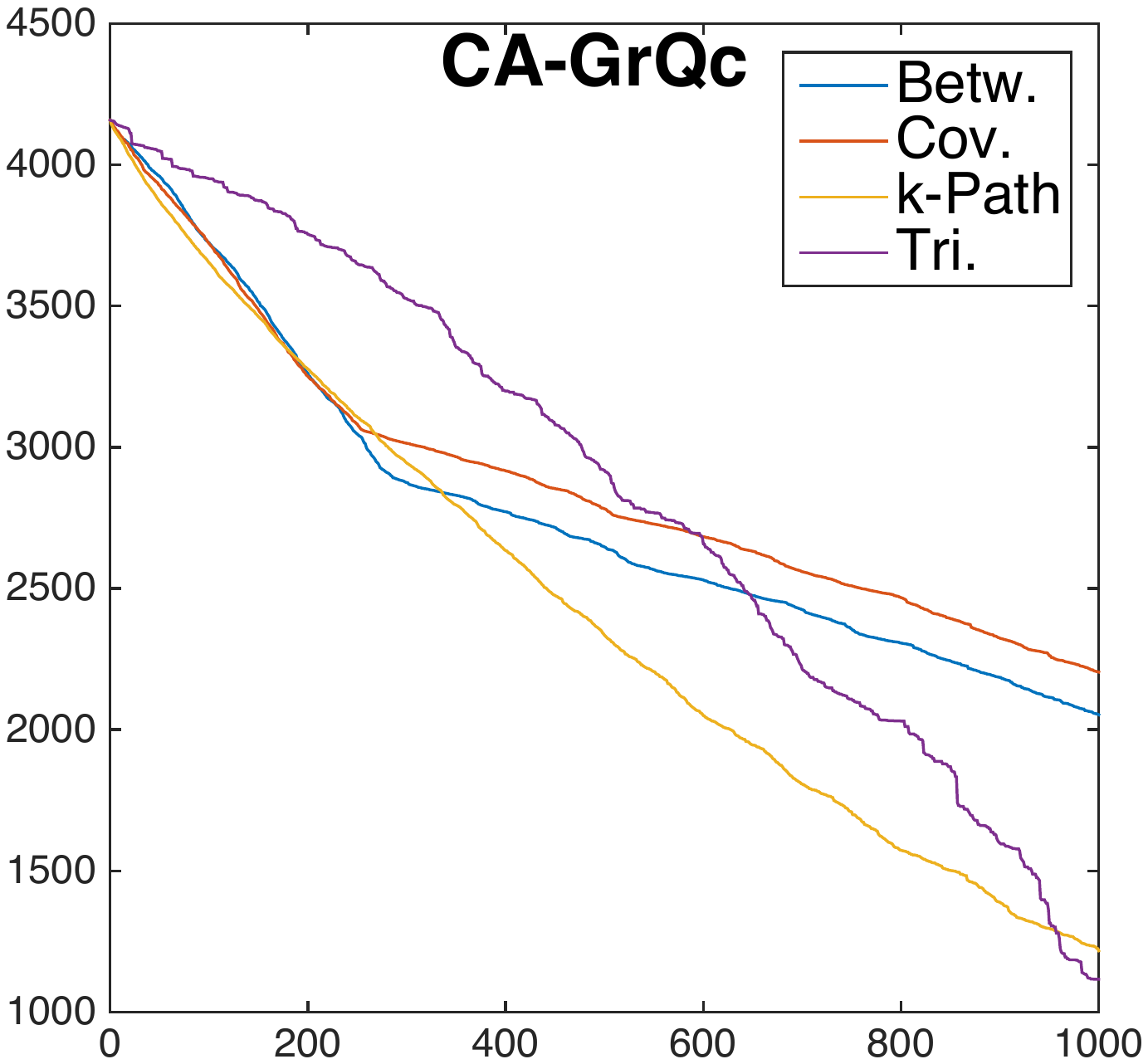} &
\includegraphics[width=0.30\textwidth]{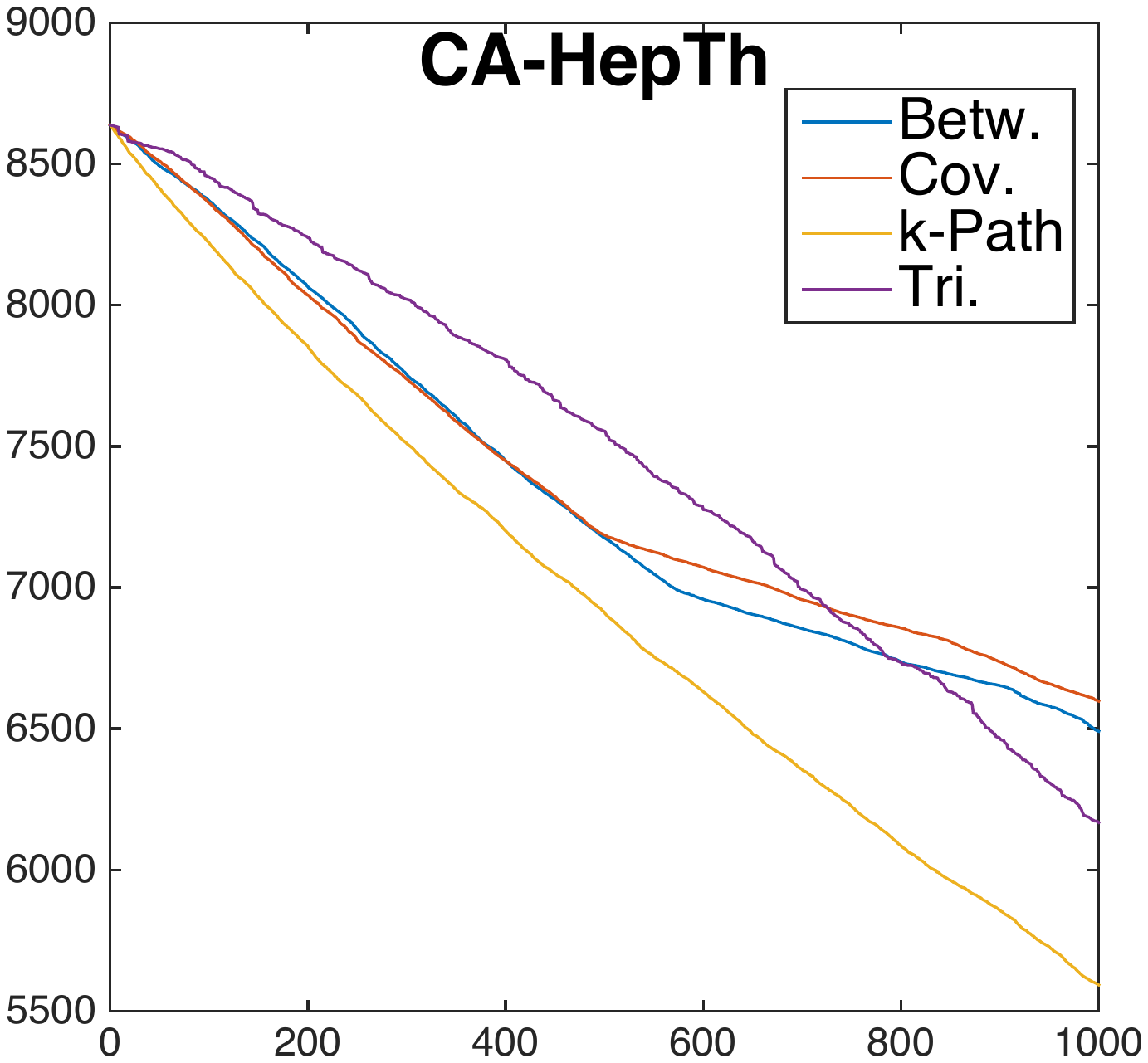} & 
\includegraphics[width=0.28\textwidth]{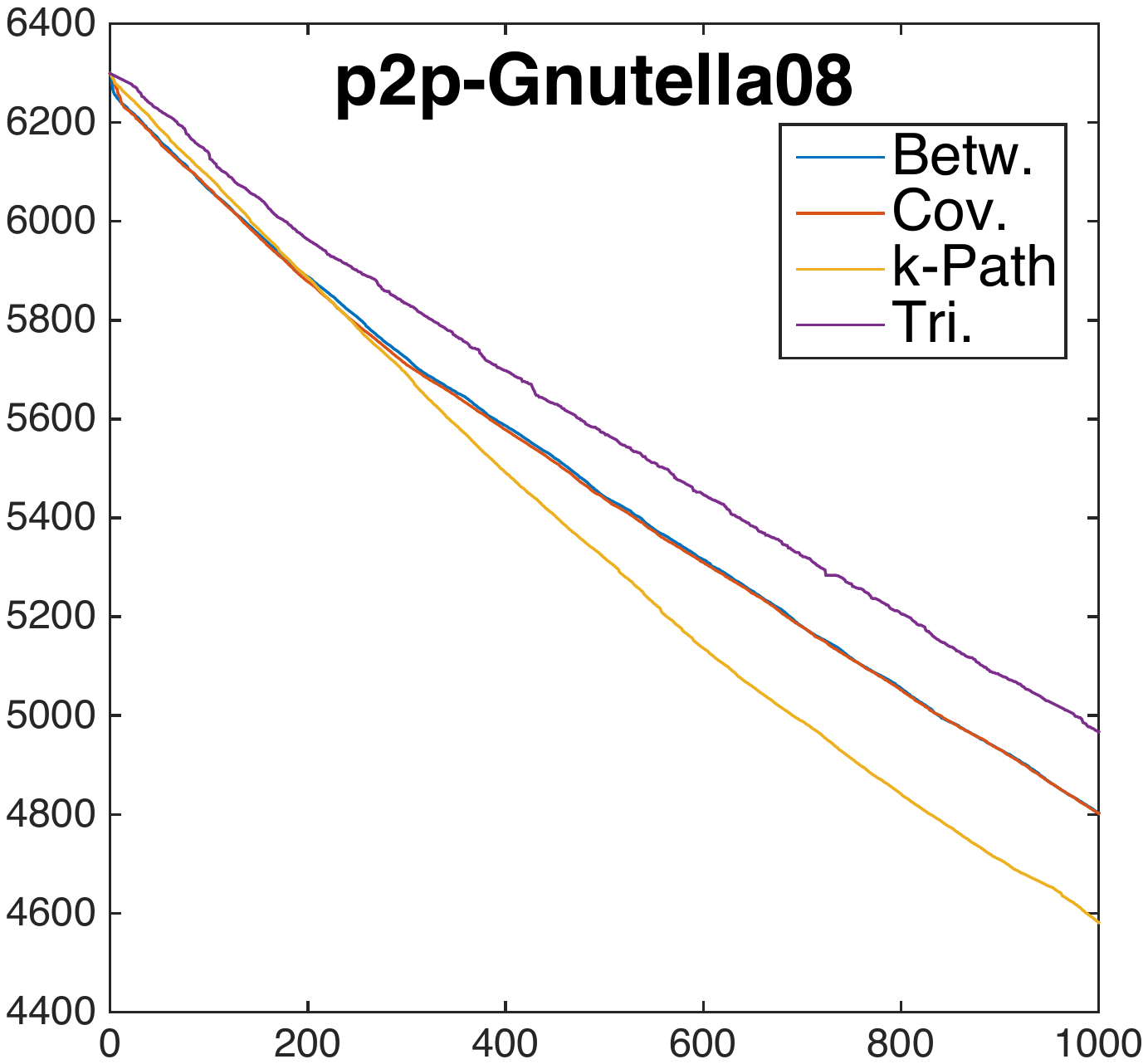} \\
(a) CA-GrQc & (b)CA-HepTh & (c) p2p-Gnutella08 \\  
 \includegraphics[width=0.30\textwidth]{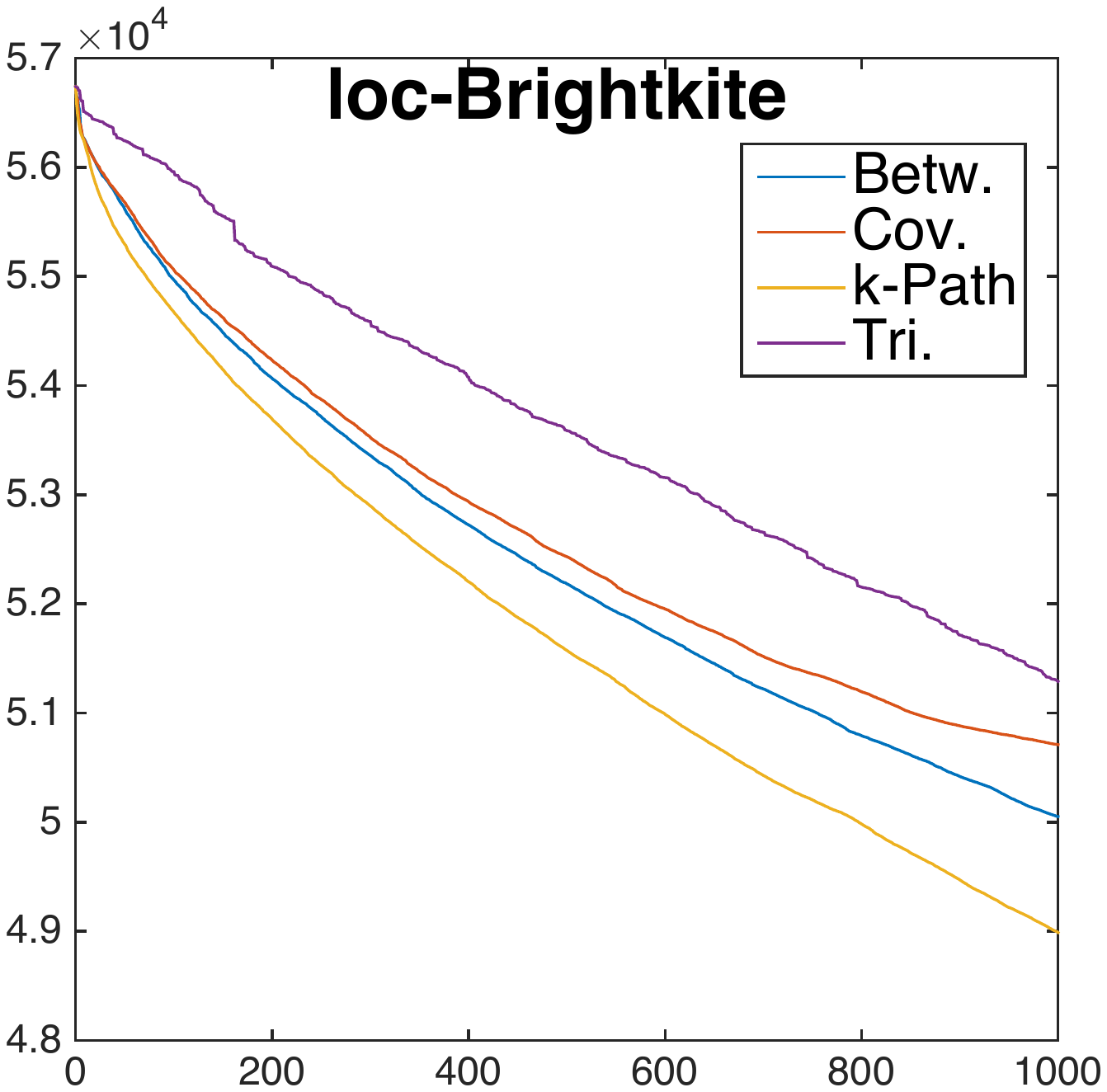}  &
\includegraphics[width=0.30\textwidth]{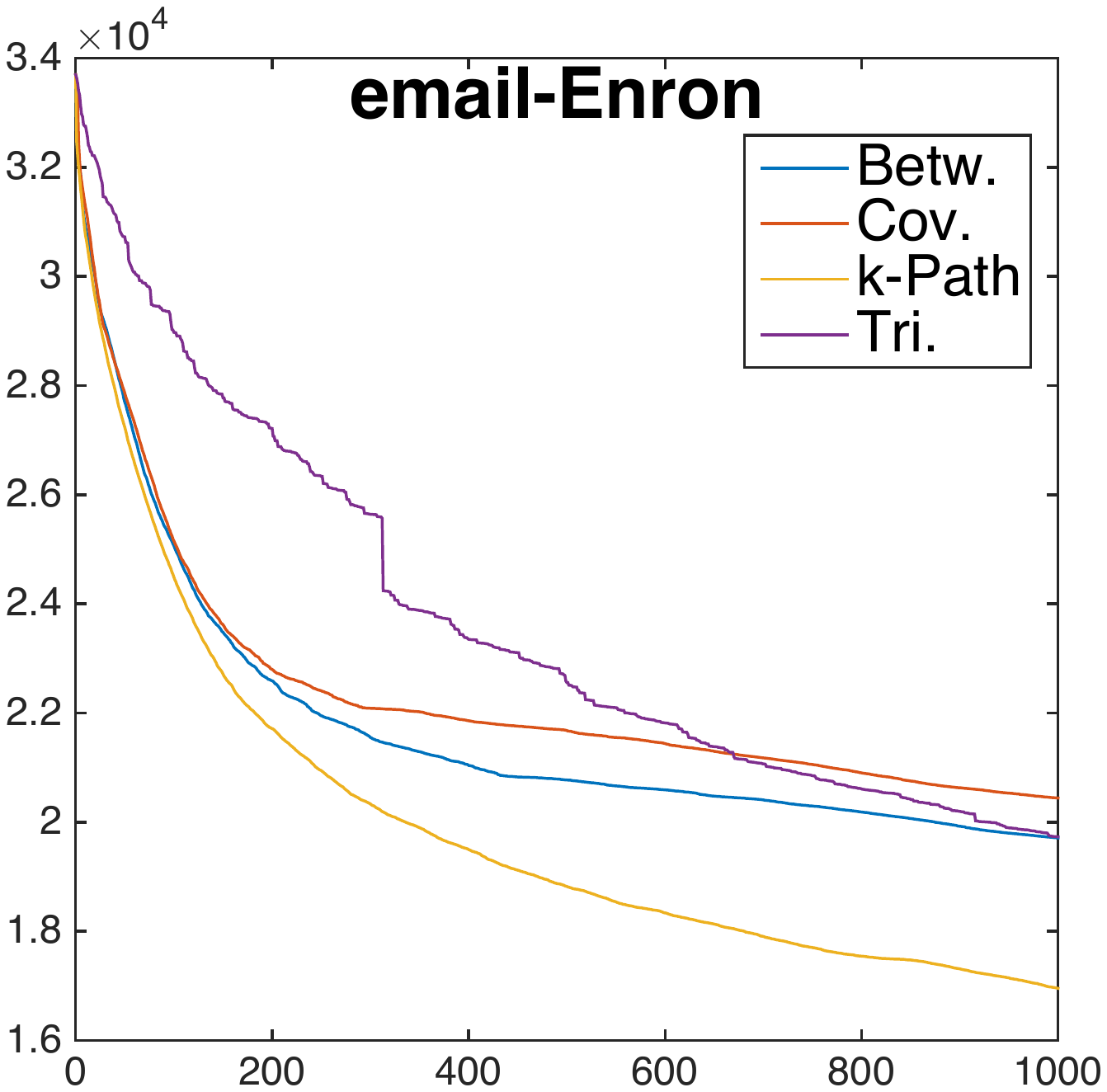} & 
\includegraphics[width=0.28\textwidth]{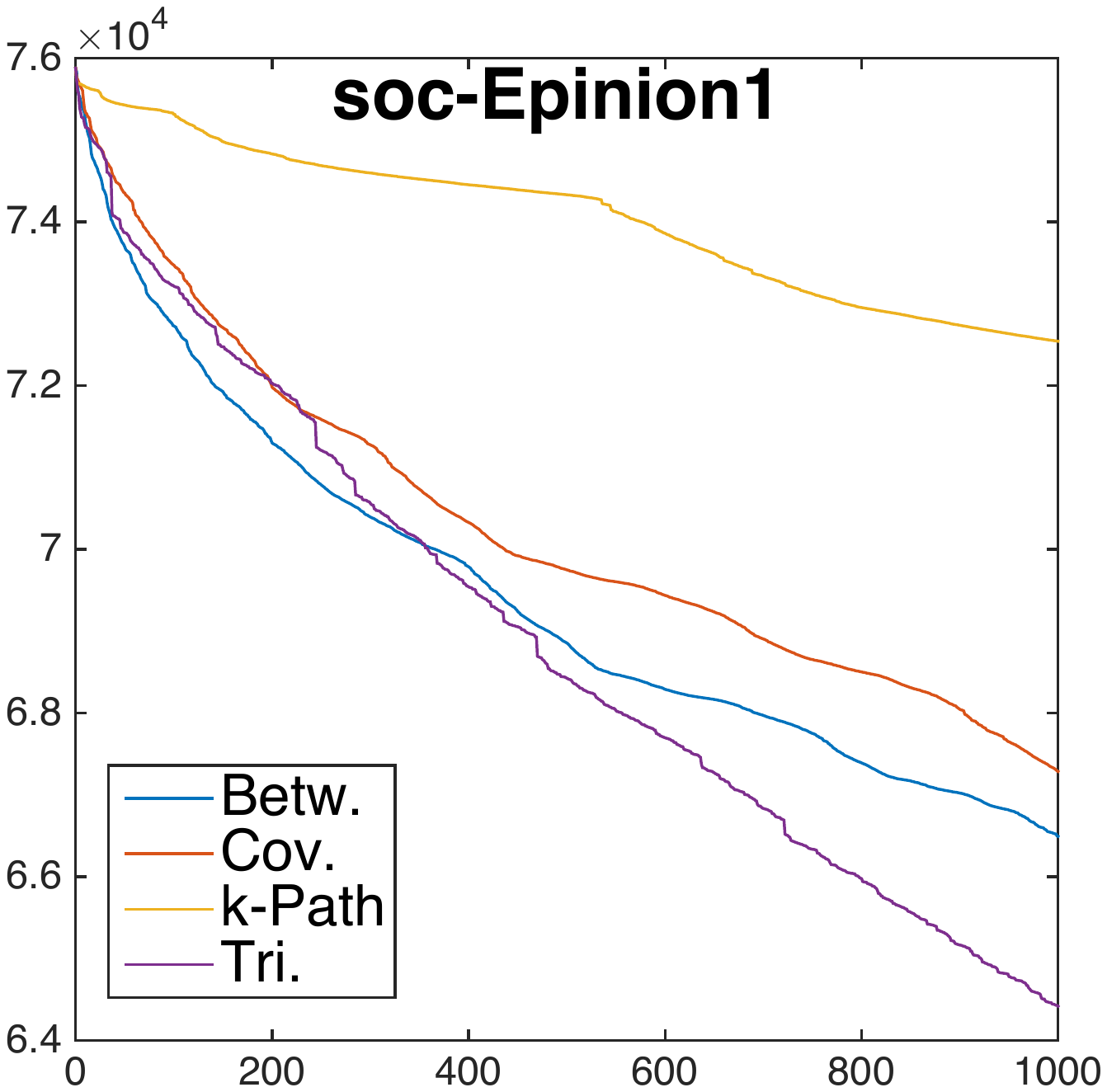} \\
(d) Brightkite& (e) Enron & (f) Epinions\\ 
\end{tabular}
\caption{\label{fig:attack}The size of the largest connected component, as we remove the first 1000 nodes in the order induced by centralities, for (a) CA-GrQc, (b)CA-HepTh, (c) p2p-Gnutella08, (d) Brightkite, (e) Enron, (f) Epinions. }
\end{figure}

\subsection*{Graph Attacks}

It is   well-known that scale-free networks are robust to random failures but vulnerable to targeted attacks \cite{albert2000error}. Some of the most efficient  strategies for attacking graph connectivity is based on removing iteratively the most central vertex in the graph \cite{holme2002attack}. Such sequential attacks are central in studying the robustness of the Internet and biological networks such as protein-protein interaction networks \cite{iyer2013attack}.

We remove the nodes one-by-one (according to the order induced by these centralities and picked by \algoname) and measure the size of the largest connected component. The results are plotted in Fig.~\ref{fig:attack}. Our observation is that  all the sizes of largest connected components decline significantly (almost linearly in size of $S$), which is compatible with our intuition of centralities. We also find that the $\kappa$-path and triangle centralities can be more effective at destroying the connectivity. Triangle centralities \cite{li2015triangle} can be  computed even on massive networks due to the efficient algorithms that have been developed for this problem, see, e.g., \cite{de2016tri,kolountzakis2012efficient,pagh2012colorful,seshadhri2013triadic,suri2011counting,tsourakakis2011triangle}.


\subsection*{Influence Maximization} 

We consider the Independent Cascade model~\cite{kempe2003maximizing}, where each edge has the probability $0.01$ of being active. For computing and maximizing the influence , we consider the algorithm of \cite{borgs2014maximizing} using $10^6$ number of samples (called {\he} but defined differently). We compute the influence of output of \algoname with output of \cite{borgs2014maximizing}. As shown in Table~\ref{table:inf}, and as we observe, the central nodes also have high influence, which shows a great correlation between being highly central and highly influential. It is worth outlining that our main point is to show that our proposed algorithm can be used to scale heuristic uses of \BW.

\begin{table}[H]
\centering
\small
\caption{\small Comparing the \emph{influence} of influential nodes (IM) and central nodes obtained by different centrality methods.}
\label{table:inf}

\begin{tabular}{lc|r|r|r|r|r|}
\cline{3-7}
                                                            & \multicolumn{1}{l|}{} & \multicolumn{5}{c|}{{\bf METHODS}}                                                                                                                                              \\ \hline
\multicolumn{1}{|l|}{{\bf GRAPHS}}                          & $k$                   & \multicolumn{1}{c|}{{\it IM}} & \multicolumn{1}{c|}{{\it betw.}} & \multicolumn{1}{c|}{{\it cov.}} & \multicolumn{1}{c|}{{\it $\kappa$-path}} & \multicolumn{1}{c|}{{\it tri.}} \\ \hline
\multicolumn{1}{|l|}{\multirow{3}{*}{{\bf CA-GrQc}}}        & {\it 10}              & 19.12                         & 13.67                            & 14.93                           & 14.10                                    & 18.48                           \\ \cline{2-7} 
\multicolumn{1}{|l|}{}                                      & {\it 50}              & 76.65                         & 67.28                            & 67.44                           & 65.06                                    & 69.30                           \\ \cline{2-7} 
\multicolumn{1}{|l|}{}                                      & {\it 100}             & 141.33                        & 126.76                           & 126.66                          & 124.51                                   & 124.06                          \\ \hline
\multicolumn{1}{|l|}{\multirow{3}{*}{{\bf CA-HepTh}}}       & {\it 10}              & 17.33                         & 15.61                            & 15.58                           & 14.63                                    & 12.98                           \\ \cline{2-7} 
\multicolumn{1}{|l|}{}                                      & {\it 50}              & 77.88                         & 70.53                            & 69.95                           & 67.80                                    & 63.95                           \\ \cline{2-7} 
\multicolumn{1}{|l|}{}                                      & {\it 100}             & 147.75                        & 133.45                           & 133.24                          & 130.41                                   & 127.52                          \\ \hline
\multicolumn{1}{|l|}{\multirow{3}{*}{{\bf p2p-Gnutella08}}} & {\it 10}              & 19.61                         & 13.05                            & 13.71                           & 10.39                                    & 18.06                           \\ \cline{2-7} 
\multicolumn{1}{|l|}{}                                      & {\it 50}              & 83.64                         & 60.58                            & 61.73                           & 51.57                                    & 74.19                           \\ \cline{2-7} 
\multicolumn{1}{|l|}{}                                      & {\it 100}             & 148.86                        & 118.27                           & 118.76                          & 103.58                                   & 132.04                          \\ \hline
\multicolumn{1}{|l|}{\multirow{3}{*}{{\bf email-Enron}}}    & {\it 10}              & 461.84                        & 458.70                           & 450.34                          & 455.25                                   & 451.53                          \\ \cline{2-7} 
\multicolumn{1}{|l|}{}                                      & {\it 50}              & 719.86                        & 703.08                           & 695.81                          & 699.74                                   & 681.05                          \\ \cline{2-7} 
\multicolumn{1}{|l|}{}                                      & {\it 100}             & 887.63                        & 863.66                           & 858.39                          & 865.76                                   & 830.15                          \\ \hline
\multicolumn{1}{|l|}{\multirow{3}{*}{{\bf loc-Brightkite}}} & {\it 10}              & 184.40                        & 162.64                           & 160.35                          & 163.16                                   & 145.19                          \\ \cline{2-7} 
\multicolumn{1}{|l|}{}                                      & {\it 50}              & 402.85                        & 372.64                           & 360.64                          & 366.28                                   & 330.45                          \\ \cline{2-7} 
\multicolumn{1}{|l|}{}                                      & {\it 100}             & 563.13                        & 521.18                           & 508.59                          & 512.77                                   & 445.11                          \\ \hline
\multicolumn{1}{|l|}{\multirow{3}{*}{{\bf soc-Epinion1}}}   & {\it 10}              & 343.89                        & 81.57                            & 111.47                          & 14.43                                    & 311.74                          \\ \cline{2-7} 
\multicolumn{1}{|l|}{}                                      & {\it 50}              & 846.18                        & 300.88                           & 282.88                          & 72.90                                    & 778.56                          \\ \cline{2-7} 
\multicolumn{1}{|l|}{}                                      & {\it 100}             & 1161.45                       & 463.04                           & 457.29                          & 133.20                                   & 1062.99                         \\ \hline
\end{tabular}
\end{table}

\section{Conclusion} 
\label{sec:concl}
In this work, we provide \algoname, a scalable algorithm for the (betweenness) Centrality Maximization problem, with theoretical guarantees. We also provide a general analytical framework for our analysis which can be applied to any monotone-submodular centrality measure that admits a {\he} sampler. We perform an  experimental analysis of our method on real-world networks which shows that our algorithm scales gracefully as the size of the graph grows while providing accurate estimations.  Finally, we study some interesting properties of the most central nodes.  

A question worth investigating is whether removing nodes in the reserve order, namely by starting from the least central ones, 
produces a structure-revealing permutation, as it happens with peeling in the context of dense subgraph 
discovery   \cite{Char00,tsourakakis2015k}.

\section{Acknowledgements}
This work was supported in part by NSF grant IIS-1247581 and NIH grant R01-CA180776.




\bibliographystyle{abbrv}
\bibliography{centRefnew}

\end{document}